\newcommand{\TODO}[1]{%
\typeout{WARNING!!! there is still a TODO left}
\marginpar{\textbf{!TODO: }\emph{#1}}
}
\newcommand{\TODO}[1]{}
\newenvironment{todo}[1]{\noindent\rule{.3\textwidth}{1pt}\TODO{#1}\\}{\\\rule{.3\textwidth}{1pt}}
\newcommand{\NOTE}[1]{%
\typeout{WARNING!!! there are still DRAFT NOTES left}
\marginpar{!DRAFT}\emph{\textbf{DRAFT NOTES:} #1}
}
\newcommand{\NOTE}[1]{}
\title{New Protocols and Lower Bound for Quantum~Secret~Sharing~with~Graph States}
\author{J\'er\^ome Javelle\inst{2}\and Mehdi Mhalla\inst{1,2} \and Simon Perdrix\inst{1,2}}
\date{}
\institute{CNRS \and LIG, Grenoble University, France}
\newcommand{\ket}[1]{\ensuremath{\left|#1\right\rangle}} % Dirac Kets
\newcommand{\bra}[1]{\ensuremath{\left\langle#1\right|}} % Dirac Bras
\newcommand{\rank}{\mathsf{rank}}
\begin{document}

\maketitle

\begin{abstract}
We introduce a new family of quantum secret sharing protocols with limited quantum resources which extends the protocols proposed by Markham and Sanders \cite{MS08} and Broadbent, Chouha, and Tapp \cite{Tapp}. 
Parametrized by a graph $G$ and a subset of its vertices $A$, the protocol consists in: ($i$) encoding the quantum secret into the corresponding  graph state by acting on the qubits in $A$; ($ii$) use a  classical encoding to ensure the existence of a threshold. 
These new protocols realize $((k,n))$ quantum secret sharing i.e., any set of at least $k$ players among $n$ can reconstruct the quantum secret, whereas any set of less than $k$ players has no information about the secret. In the particular case where the secret is encoded on all the qubits,  
we explore the values of $k$ for which there exists a graph such that the corresponding protocol realizes a $((k,n))$ secret sharing.  
 We show that for any threshold $k\ge n-n^{0.68}$ there exists a graph allowing a $((k,n))$ protocol. On the other hand, we prove that for any $k< \frac{79}{156}n$ there is no graph $G$ allowing a $((k,n))$ protocol. 
   As a consequence there exists $n_0$ such that the protocols introduced by Markham and Sanders in \cite{MS08} admit no threshold $k$ when the secret is encoded on all the qubits and $n>n_0$.

\end{abstract}

\begin{keywords}
Quantum Cryptography, Secret Sharing, Graphs, Graph States.
\end{keywords}

\section{Introduction}

Secret sharing schemes were independently introduced by Shamir \cite{Shamir} and Blakley \cite{B} and extended to the quantum case by Hillery  \cite{Hill} and Gottesman \cite{CG99,G00}. 

A $((k,n))$ quantum secret sharing \cite{Hill,CG99,G00} is a protocol by which a dealer distributes shares of a quantum secret to $n$ players such that any subset of at least $k$ players can reconstruct the secret by combining their shares, while any set of less than $k$ players cannot have any information about the secret. It is assumed that the secret is an arbitrary one-qubit state, that the dealer has only one copy of the secret he wants to share and that the players  can communicate together using classical and quantum channels.

A direct consequence of the no-cloning theorem \cite{nocloning} is that no $((k,n))$ quantum secret sharing protocol can exists when $k\le \frac n2$ -- otherwise two distinct sets of players can reconstruct the secret implying a cloning of quantum secret. On the other hand, for any $k>\frac n 2$ a $((k,n))$ protocol has been introduced in \cite{G00} 
 in such a way that the dimension of each share is proportional to the number of players.

The unbounded size of the share is a strong limitation of the protocol, as a consequence several schemes of quantum secret sharing  using a bounded amount of resources for each player have been introduced \cite{MS08,Tapp,qdit}. In particular, in \cite{MS08} a quantum secret sharing  scheme using graph states is presented where each player receives a single qubit. At the forefront in terms of implementation,  the graph states has emerged as a powerful and elegant family of entangled state \cite{HEB04,RB01}.

Only few threshold quantum secret sharing  schemes have been proved in the literature to be achievable using graph states: $((3,5))$ can be done using a $C_5$ graph (cycle with 5 vertices), and for any $n$, an $((n,n))$ protocol using the complete graph can be done, up to some constraints on the quantum secret  \cite{MS08}. Independently \cite{Tapp} introduced an $((n,n))$ protocol for any $n$. This protocol is based on the GHZ state \cite{GHZ} which is locally equivalent to a complete graph state \cite{HEB04}. The technique which consists in mixing the quantum secret before to encode it into a larger state is also used in \cite{NMH} in such a way that some players have a classical share but no quantum share.

  We introduce a new family of secret sharing protocols using graph states.  Like in \cite{MS08} the quantum secret is encoded into a graph state shared between the players, but in order to obtain threshold protocols, an additional round is added to the protocol. This round consists in mixing the quantum secret using a one-time pad  scheme which classical key is then shared between the players using a classical secret sharing protocol. This technique extends the one presented in \cite{Tapp} in which the secret is partially mixed and then shared using a fixed quantum state, namely the GHZ state which is equivalent to the complete graph state.    Independently, a hybrid classical-quantum construction of QSS has been rencently proposed in \cite{FG11} where they optimize the quantum communication complexity when the size of the secret is greater  than the number of players. 
  
  The family of protocols we introduce in the present paper is parametrised by a pair $(G,A)$ where $G$ is a graph and $A$ is a non empty set of vertices of the graph. We explore the possible values of $k$ for which there exists a pair $(G,A)$ leading to a $((k,n))$ protocol. Our main results are: first, we introduce a family of graphs which can realise any $((k,n))$  protocol when $k>n-n^{0.68}$. This result proves that graph states secret sharing can be used not only for $((n,n))$ protocols, but also for any threshold larger than $n-n^{0.68}$. The second main result of the paper is the proof that there is no graph $G$ such that $(G,V(G))$  realizes a $((k,n))$ protocols when $k<  \frac {79} {156} n$. Notice that this lower bound also applies in the protocol introduced by Markham and Sanders.  Moreover, it suggests that secret sharing protocols with a threshold closed to half of the players cannot  be achieve with shares of bounded size.

Section \ref{sec:gs} is dedicated to the description of the various secret sharing protocols based on graph states: section \ref{sec:cqss} describes the protocol cQSS introduced in \cite{MS08} for sharing a classical secret, while section \ref{sec:qqss} describes its extension to a quantum secret qQSS. We prove that the sufficient combinatorial conditions for accessibility and privacy introduced in \cite{KMMP} for these protocols are actually necessary. These graphical characterisation are key ingredients for proving the main results of this paper. In section \ref{sec:qqsss} the new family of protocols qQSS* is introduced. In section \ref{sec:c5} we prove that for any $k>n-n^{0.68}$ there exists a qQSS* protocol which realises a $((k,n))$ secret sharing. Finally, in section \ref{sec:lowerbound}, we prove the following lower bound: for any $k<\frac {79}{156}n$ there is no graph $G$ such that $(G,V(G))$  realises a $((k,n))$ qQSS* scheme. A preliminary version of this work has been presented at \cite{JMP11}.

\section{Graph state secret sharing}\label{sec:gs}

\subsection{Sharing a classical secret using a graph state}\label{sec:cqss}

For a given graph $G$ on $n$ vertices $v_1,\ldots, v_n$, the corresponding graph state $\ket G$ is a $n$-qubit quantum state defined as $$\ket G = \frac 1{\sqrt{2^n}}\sum_{x\in \{0,1\}^n}(-1)^{q(x)}\ket x$$ where $q(x)$ is the number of edges in the induced subgraph $G_x = (\{v_i\in V(G) ~|~x_i=1\}, \{(v_i,v_j) \in E(G) ~|~x_i=x_j=1\})$.

Graph states have the following fundamental fixpoint property: given a graph $G$, for any vertex $u\in V(G)$, $$X_uZ_{N(u)}\ket G =\ket G$$ where $N(u)$ is the neighborhood of $u$ in $G$, $X=\ket x\mapsto \ket {\bar x}$, $Z=\ket x \mapsto (-1)^x\ket x$ are one-qubit Pauli operators and $Z_A = \bigotimes_{u\in A}Z_u$ is a Pauli operator acting on the qubits in $A$. As a consequence, for any subset $D\subseteq V(G)$ of vertices, $\bigotimes_{u\in D} X_u Z_{N(u)}\ket G = \ket G$. Since $X$ and $Z$ anti-commutes and $Z^2=X^2=I$, $$ (-1)^{|D\cap Odd(D)| }X_DZ_{Odd(D)}\ket G = \bigotimes_{u\in D} X_u Z_{N(u)} \ket G = \ket G$$ where $Odd(D) := \{v\in V(G) ~s.t.~|N(v)\cap D|=1 \bmod{2}\}$ is the odd neighborhood of $D$.

We present a family of quantum protocols for sharing a classical secret ($cQSS$) parametrized  by a graph $G$ and a non empty subset $A$ of the vertices of the graph. This family of protocols has been introduced in \cite{MS08}. Obviously, sharing a classical bit can be done using a classical scheme, like \cite{Shamir}, instead of using a quantum state. However, the study of the cQSS protocols, and in particular the characterisation of accessibility and privacy (see  corollary \ref{cor:charac}) are essential for  the next sections where the sharing of a quantum secret is considered.

Suppose a dealer wants to share a classical secret $s\in\{0,1\}$ between $n=|V(G)|$ players.  
The dealer prepares the state $\ket {G_s} = Z_A^s\ket G$ 
 where $Z_A^0$ is the identity and $Z_A^1$ consists in applying the Pauli operator $Z$ on each qubit of $A$. 
The dealer sends  each player $i$ the qubit $q_i$ of $\ket {G_{s}}$. 
Regarding the reconstruction of the secret, a set $B$ of players can recover the secret if and only if $tr(\rho_B(0)\rho_B(1))=0$, i.e. if the set of players can distinguish perfectly between the two states $\rho_B(0)$ and $\rho_B(1)$, where $\rho_B(s) = tr_{V\setminus B} (\ket{G_s}\bra{G_s})$ is  the state of the subsystem of the players in $B$. On the other hand, a set $B$ of players has no information about the secret if and only if $\rho(0)$ and $\rho(1)$ are indistinguishable, i.e. $\rho(0)=\rho(1)$.

Sufficient graphical conditions for accessibility and privacy have been proved in \cite{KMMP}:

\begin{lemma}[\cite{KMMP}]
Given a cQSS protocol $(G,A)$, for any $B\subseteq V(G)$,
%\centerline{\begin{array}{rcl}
\\{-- If $ \exists D\subseteq B$ s.t. $D\cup Odd(D)\subseteq B$ and $|D\cap A|=1 \bmod{2} $ then  $ B$ can access the secret.}\\ %where $Odd(X):=\{v\in V(G) ~s.t.~|N(v)\cap X| = 1[2]\}$ and $1[2]$ means $1$ modulo $2$.\\
{-- If $\exists C\subseteq \overline B = V(G) \setminus B$ s.t. $Odd(C)\cap B = A\cap B $ then  $B$ has no information about the secret.}
%\end{array}
\end{lemma}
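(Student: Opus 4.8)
The plan is to prove each implication by exhibiting an explicit Pauli operator built from the stabilizer structure of $\ket G$, exploiting throughout the fixpoint identity $(-1)^{|D\cap Odd(D)|}X_DZ_{Odd(D)}\ket G = \ket G$ recalled above.

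For the accessibility part, suppose $D\subseteq B$ satisfies $D\cup Odd(D)\subseteq B$ and $|D\cap A|$ odd. I would consider the Pauli operator $M = X_DZ_{Odd(D)}$, which by hypothesis is supported entirely on the qubits held by $B$. The first step is to compute how $M$ acts on $\ket{G_s}=Z_A^s\ket G$: commuting $X_D$ past $Z_A^s$ produces the sign $(-1)^{s|D\cap A|}$, because $X$ and $Z$ anticommute exactly on the qubits of $D\cap A$, and then applying the fixpoint identity gives $M\ket{G_s}=(-1)^{s|D\cap A|+|D\cap Odd(D)|}\ket{G_s}$. Since $|D\cap A|$ is odd, the eigenvalues for $s=0$ and $s=1$ differ by a sign. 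The second step is to transfer this to the reduced states: writing $M=M_B\otimes I_{\overline B}$ and tracing out $\overline B$ shows that $\rho_B(0)$ and $\rho_B(1)$ are supported on the two distinct eigenspaces of $M_B$, hence are orthogonal and $tr(\rho_B(0)\rho_B(1))=0$, which is exactly the accessibility criterion stated before the lemma.

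For the privacy part, suppose $C\subseteq\overline B$ satisfies $Odd(C)\cap B = A\cap B$. The key idea is that the stabilizer $N=X_CZ_{Odd(C)}$, although it touches $B$, touches it only through $Z_{Odd(C)\cap B}=Z_{A\cap B}$. I would therefore factor $N = Z_{A\cap B}\,W_{\overline B}$, where $W_{\overline B}=X_CZ_{Odd(C)\cap\overline B}$ acts only on $\overline B$ and the two factors commute, having disjoint support. Applying the fixpoint identity $\ket G = \pm N\ket G$ and then multiplying on the left by $Z_{A\cap B}$ yields $Z_{A\cap B}\ket G = \pm W_{\overline B}\ket G$. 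Combining with the part $Z_{A\cap\overline B}$ of $Z_A$ already supported on $\overline B$, I obtain $\ket{G_1}=Z_A\ket G = U_{\overline B}\ket G = U_{\overline B}\ket{G_0}$ for a Pauli operator $U_{\overline B}$ acting solely on $\overline B$. Since the partial trace over $\overline B$ is invariant under unitaries on $\overline B$, this gives $\rho_B(1)=\rho_B(0)$, i.e. $B$ has no information about the secret.

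The routine parts are the Pauli commutation bookkeeping and the sign tracking. The step I expect to require the most care is the factorization $N=Z_{A\cap B}W_{\overline B}$ in the privacy argument, since it is precisely there that the combinatorial hypothesis $Odd(C)\cap B=A\cap B$ is converted into the statement that the only action of the stabilizer on $B$ coincides with the secret-encoding operator $Z_{A\cap B}$, allowing the encoding to be cancelled by an operation confined to $\overline B$.
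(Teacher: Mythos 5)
Your proof is correct: the accessibility half is exactly the argument implicit in the paper, which states right after the lemma that $B$ reconstructs the secret by measuring the observable $(-1)^{|D\cap Odd(D)|}X_DZ_{Odd(D)}$, whose eigenvalue on $\ket{G_s}$ flips with $s$ precisely because $|D\cap A|$ is odd; the privacy half is the standard stabilizer argument (the paper itself only cites \cite{KMMP} for this lemma and gives no proof). Your factorization $N=\pm Z_{A\cap B}W_{\overline B}$ and the conclusion $\ket{G_1}=U_{\overline B}\ket{G_0}$ with $U_{\overline B}$ supported on $\overline B$ is sound, and the sign bookkeeping is harmless since global phases do not affect the reduced states.
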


According to the previous lemma, for a given set of players $B\subseteq V(G)$, if $\exists  D\subseteq B$ s.t. $D\cup Odd(D)\subseteq B$ and $|D\cap A|=1 \bmod{2} $ then  $ B$ can access the secret. More precisely, the players in $B$ perform a measurement of their qubits according to the observable $(-1)^{|D\cap Odd(D)|}X_DZ_{Odd(D)}$. This measurement produces a classical outcomes $s\in \{0,1\}$ which is the reconstructed secret \cite{KMMP}.

We prove that the sufficient graphical conditions are actually necessary conditions, and that any set of players is either able to access the secret or has no information about the secret.

\begin{lemma}
Given a graph $G$ and $A\subseteq V(B)$, for any $B\subseteq V(G)$, $B$ satisfies exactly one of the two properties:\\
$~~~i.$$~~\exists D\subseteq B, D\cup Odd(D)\subseteq B \wedge |D\cap A|=1 \bmod{2}$\\
$~~~ii.$ $\exists C\subseteq V\setminus B, Odd(C)\cap B = A\cap B$% 
\end{lemma}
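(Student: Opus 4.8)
The plan is to recognize that the whole statement is a linear-algebra dichotomy over $\mathbb{F}_2$, and to derive it from the Fredholm alternative (orthogonality of kernel and image). The key observation is that $Odd$ is $\mathbb{F}_2$-linear: writing $\Gamma\in\mathbb{F}_2^{V\times V}$ for the adjacency matrix of $G$ and $\mathbb{1}_S\in\mathbb{F}_2^V$ for the characteristic vector of a set $S$, one has $\mathbb{1}_{Odd(D)}=\Gamma\,\mathbb{1}_D$, because $(\Gamma\mathbb{1}_D)_v=|N(v)\cap D|\bmod 2$. First I would fix the partition $V=B\sqcup\overline B$ and write $\Gamma$ in block form, so that for $x$ supported on $B$ one has $(\Gamma x)|_{\overline B}=\Gamma_{\overline B B}\,x$, and for $y$ supported on $\overline B$ one has $(\Gamma y)|_{B}=\Gamma_{B\overline B}\,y$, where symmetry of $\Gamma$ gives $\Gamma_{\overline B B}=(\Gamma_{B\overline B})^T$.

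Next I would restate both properties as statements about a single subspace of $\mathbb{F}_2^B$. Property (i) asks for $x$ with $\mathrm{supp}(x)\subseteq B$, $\Gamma_{\overline B B}\,x=0$ (this is $Odd(D)\subseteq B$), and $\langle\mathbb{1}_A,x\rangle=1$; since $x$ lives on $B$ the last condition is just $\langle\mathbb{1}_{A\cap B},x\rangle=1$. Hence (i) holds iff the functional $x\mapsto\langle\mathbb{1}_{A\cap B},x\rangle$ does not vanish identically on $\ker\Gamma_{\overline B B}\subseteq\mathbb{F}_2^B$; equivalently, (i) \emph{fails} iff $\mathbb{1}_{A\cap B}\in(\ker\Gamma_{\overline B B})^\perp$, the orthogonal complement taken inside $\mathbb{F}_2^B$. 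On the other side, property (ii) asks for $y$ supported on $\overline B$ with $(\Gamma y)|_B=\mathbb{1}_A|_B$, i.e. $\Gamma_{B\overline B}\,y=\mathbb{1}_{A\cap B}$, which is exactly $\mathbb{1}_{A\cap B}\in\operatorname{im}\Gamma_{B\overline B}$.

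The two reformulations are then matched by the standard identity $(\ker M)^\perp=\operatorname{im}(M^T)$, valid over any field with the dot product. Taking $M=\Gamma_{\overline B B}$ and using $M^T=\Gamma_{B\overline B}$, I obtain $(\ker\Gamma_{\overline B B})^\perp=\operatorname{im}\Gamma_{B\overline B}$. Therefore ``(i) fails'' and ``(ii) holds'' are literally the same condition $\mathbb{1}_{A\cap B}\in\operatorname{im}\Gamma_{B\overline B}$, so exactly one of (i), (ii) holds, which is the claim.

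I expect the only delicate points to be bookkeeping rather than depth: one must be careful that the orthogonal complement is taken in the ambient space $\mathbb{F}_2^B$ (so that $\langle\mathbb{1}_A,\cdot\rangle$ is correctly replaced by $\langle\mathbb{1}_{A\cap B},\cdot\rangle$), and one should note explicitly that over $\mathbb{F}_2$ the dot product is non-degenerate, so that $\dim S+\dim S^\perp=|B|$ and $(S^\perp)^\perp=S$ hold despite the form being isotropic; these are exactly what license the identity $(\ker M)^\perp=\operatorname{im}(M^T)$. The remaining work is the routine translation of the set conditions $D\cup Odd(D)\subseteq B$ and $Odd(C)\cap B=A\cap B$ into their vector form, which the linearity of $Odd$ makes mechanical.
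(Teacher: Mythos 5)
Your proof is correct and follows essentially the same route as the paper: both reduce the dichotomy to $\mathbb{F}_2$-linear algebra on the cut matrix $\Gamma_{B,\overline B}$, the paper via an explicit rank computation (its quantity $R(B)\in\{0,1\}$ records exactly whether $\mathbb{1}_{A\cap B}$ lies in the row space of the cut matrix) and you via the equivalent identity $(\ker M)^\perp=\operatorname{im}(M^T)$. Your packaging is a clean restatement of the same key fact, and your explicit remark that only non-degeneracy (not anisotropy) of the dot product over $\mathbb{F}_2$ is needed is a worthwhile precaution.
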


\begin{proof}
For a given $B\subseteq V(G)$, let $\Gamma_{B, \overline B}$ be the cut matrix induced by $B$, i.e. the sub-matrix of the adjacency matrix $\Gamma$ of $G$ such that  the columns of $\Gamma_{B,\overline B}$ correspond to the vertices in $B$ and its rows to the vertices in $\overline B$. $\Gamma_{B, \overline B}$ is the matrix representation of the linear function which maps every $X\subseteq B$ to $\Gamma_{B,\overline B}.X = Odd(X)\cap \overline  B$, where the set $X$ is identified with its characteristic column vector.  Similarly, $\forall Y\subseteq \overline B$, $\Gamma_{\overline B,B}.Y = Odd(Y)\cap B$ where $\Gamma_{\overline B, B} = \Gamma_{B,\overline B}^T$ since $\Gamma$ is symmetric. Moreover, notice that for any set $X,Y\subseteq V(G)$, $|X\cap Y| \bmod{2}$ is given by the matrix product $Y^T.X$ where again sets are identified with their column vector representation. Equation $(i)$ is satisfied iff $\exists D$ s.t. $\left(\frac{(A\cap B)^T}{\Gamma_{B,\overline B}}\right).D = \left(\frac 1 0\right)$ which is equivalent  to $\rank\left(\frac{(A\cap B)^T}{\Gamma_{B,\overline B}}\right)  = \rank\left(\frac{(A\cap B)^T|~1}{~\Gamma_{B,\overline B}~~|~0}\right) = \rank\left(\frac{~~~0~~|~1}{\Gamma_{B,\overline B}~|~0}\right) =\rank(\Gamma_{B,\overline B})+1$. Thus $(i)$ is true iff $R(B) = 1$ where $R(B):=\rank\left(\frac{(A\cap B)^T}{\Gamma_{B,\overline B}}\right) - \rank(\Gamma_{B,\overline B})$. Similarly equation $(ii)$ is satisfied iff $\exists C$ s.t. $\Gamma_{\overline B,B}.C = A\cap B$ iff $\rank(\Gamma_{\overline B,B}|A\cap B) = \rank(\Gamma_{\overline B,B})$. Thus $(ii)$ is true iff  $R(B)=0$. Since for any $B\subseteq V(G)$,  $R(B)\in \{0,1\}$ it comes that either $(i)$ is true or $(ii)$ is true. \hfill$\Box$ \end{proof}

\begin{corollary}\label{cor:charac}
Given a cQSS protocol $(G,A)$, for any $B\subseteq V(G)$,\\
\centerline{ $ \exists D\subseteq B, D\cup Odd(D)\subseteq B ~\wedge~ |D\cap A|=1 \bmod{2}  \iff  B$ can access the secret.}
\centerline{ $\exists C\subseteq V\setminus B, Odd(C)\cap B = A\cap B \iff B$ has no information about the secret. }

\end{corollary}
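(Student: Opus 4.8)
The plan is to close the loop using three ingredients that are already in hand: the sufficiency directions from the KMMP lemma, the dichotomy established in the preceding lemma, and an elementary observation that perfect distinguishability and perfect indistinguishability of the two reduced states cannot hold simultaneously. The KMMP lemma already gives $(i)\Rightarrow B$ can access the secret and $(ii)\Rightarrow B$ has no information; so only the two converse implications remain to be proved.

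First I would record the mutual exclusivity on the quantum side. With $\rho_B(s)=tr_{V\setminus B}(\ket{G_s}\bra{G_s})$, the statement ``$B$ can access the secret'' means $tr(\rho_B(0)\rho_B(1))=0$, while ``$B$ has no information'' means $\rho_B(0)=\rho_B(1)$. If both held at once we would get $tr(\rho_B(0)^2)=0$, which is impossible for a density operator: being positive semidefinite with unit trace, it has a positive eigenvalue, so $tr(\rho^2)>0$. Hence no set $B$ can both access the secret and have no information about it.

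Next I would invoke the preceding lemma, which guarantees that for every $B$ exactly one of the combinatorial conditions $(i)$, $(ii)$ holds. Combined with the KMMP sufficiency directions, this already forces every $B$ to either access the secret or have no information about it, with no intermediate regime. The two converses then follow by a short contradiction argument. Suppose $B$ can access the secret: by the lemma either $(i)$ or $(ii)$ holds, but $(ii)$ would yield no information by KMMP, contradicting mutual exclusivity, so $(i)$ holds. Symmetrically, if $B$ has no information then $(i)$ cannot hold, since it would force accessibility, so $(ii)$ holds. Together with the sufficiency directions this gives both equivalences.

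The proof is essentially a clean logical closure and involves no heavy computation; the one conceptual point to get right is the mutual exclusivity step, i.e. ruling out that a set could simultaneously distinguish the states perfectly and carry no information. This is where the structural strength of the graph-state construction appears: a priori one expects a third possibility, namely partial information, and the content of the corollary is exactly that the combinatorial dichotomy of the preceding lemma collapses this three-way alternative into a two-way one.
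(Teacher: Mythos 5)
Your proof is correct and follows essentially the same route the paper intends: the corollary is stated without an explicit proof precisely because it is meant to follow from the KMMP sufficiency lemma together with the dichotomy lemma, with the mutual exclusivity of perfect distinguishability ($tr(\rho_B(0)\rho_B(1))=0$) and indistinguishability ($\rho_B(0)=\rho_B(1)$) closing the loop via $tr(\rho^2)>0$. You have simply made explicit the step the authors leave implicit, and your argument for it is sound.
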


\subsection{Sharing a quantum secret}\label{sec:qqss}
 
Following \cite{MS08},  the cQSS protocols are extended to qQSS schemes for sharing a quantum secret $\ket \phi = \alpha \ket 0+\beta \ket 1$. Given a graph $G$ and $A$ a non empty subset of vertices, the dealer prepares the quantum state $\ket {G_\phi} = \alpha \ket {G_0}+ \beta\ket{G_1}$. Notice that the transformation $\ket \phi \mapsto \ket {G_\phi}$ is a valid quantum evolution -- i.e. an isometry -- whenever $\ket {G_0}$ is orthogonal to $\ket {G_1}$ which is guaranteed by $A\neq \emptyset$.  Then, the dealer sends  each player $i$ the qubit $q_i$ of $\ket{G_\phi}$. Regarding the reconstruction of the secret, it has been proved in \cite{MS08}, that a set $B$ of players can recover the quantum state $\ket \phi$ if and only if $B$ can reconstruct a classical secret in the two protocols $cQSS(G,A)$ and $cQSS(G\Delta A,A)$, where $G\Delta A =(V(G), E(G)\Delta (A\times A))$ and $X\Delta Y = (X\cup Y)\setminus (X\cap Y)$ is the symmetric difference. In other words $G\Delta A$ is obtained by complementing the egdes of $G$ incident to two vertices in $A$. We introduce an alternative characterisation of q-accessibility (ability to reconstruct a quantum secret) which does not involved the complemented graph $G\Delta A$:

 \begin{lemma}
Given a qQSS protocol $(G,A)$, a set $B$ of players can reconstruct the quantum secret if and only if, in the protocol cQSS $(G,A)$, $B$ can reconstruct the classical secret and $\overline B = V(G)\setminus B$ cannot.  
 \end{lemma}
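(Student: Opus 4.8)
The plan is to reduce the statement to the characterisation of \cite{MS08} quoted just above: $B$ reconstructs the quantum secret if and only if $B$ can access the classical secret in \emph{both} $cQSS(G,A)$ and $cQSS(G\Delta A,A)$. Since the $G$-part is common to both sides of the claimed equivalence, it will suffice to prove that, under the hypothesis that $B$ accesses the secret in $cQSS(G,A)$, accessibility of $B$ in $cQSS(G\Delta A,A)$ is equivalent to $\overline B$ being \emph{unable} to access the secret in $cQSS(G,A)$. By Corollary~\ref{cor:charac} and the dichotomy it encodes, the latter is the same as $\overline B$ having no information in $cQSS(G,A)$, i.e. the existence of $C\subseteq B$ with $Odd(C)\cap\overline B = A\cap\overline B$.

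The technical heart is to express accessibility in the complemented graph without referring to $G\Delta A$ explicitly. First I would establish the identity $Odd_{G\Delta A}(D)=Odd(D)\,\Delta\,(A\cap D)\,\Delta\,\lambda A$, where $\lambda=|A\cap D|\bmod 2$; this follows by writing the adjacency matrix of $G\Delta A$ over $\mathbb{F}_2$ as $\Gamma+\chi_A\chi_A^T+\mathrm{diag}(\chi_A)$ (the last term cancelling the diagonal introduced by $\chi_A\chi_A^T$) and applying it to the characteristic vector of $D$. Feeding this into the accessibility criterion of Corollary~\ref{cor:charac} for $G\Delta A$, and using that any witness $D$ satisfies $D\subseteq B$ (so $A\cap D\subseteq B$, and this term vanishes upon intersecting with $\overline B$) together with $|D\cap A|=1\bmod 2$ (so $\lambda=1$), the requirement $D\cup Odd_{G\Delta A}(D)\subseteq B$ collapses to the clean condition: $D\subseteq B$, $|D\cap A|$ odd, and $Odd(D)\cap\overline B = A\cap\overline B$.

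At this point the two conditions to be matched are almost identical: $B$ accesses in $G\Delta A$ iff there is $D\subseteq B$ with $Odd(D)\cap\overline B = A\cap\overline B$ \emph{and} $|D\cap A|$ odd, whereas $\overline B$ has no information iff there is $C\subseteq B$ with $Odd(C)\cap\overline B = A\cap\overline B$ but \emph{no} parity constraint. One implication is immediate, since any witness for $G\Delta A$ is in particular a witness for the privacy of $\overline B$. The converse is the main obstacle, and is exactly where the hypothesis that $B$ accesses in $cQSS(G,A)$ is used. The solutions $X\subseteq B$ of $Odd(X)\cap\overline B = A\cap\overline B$ form an affine subspace whose direction is $\{W\subseteq B : Odd(W)\cap\overline B=\emptyset\}$, that is, $\{W : W\cup Odd(W)\subseteq B\}$; by Corollary~\ref{cor:charac} the accessibility of $B$ in $cQSS(G,A)$ furnishes such a $W$ with $|W\cap A|$ odd. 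Adding this $W$ to a given solution flips the parity of the intersection with $A$ while preserving $Odd(\cdot)\cap\overline B = A\cap\overline B$, so an odd-parity solution $D$ always exists. Combining the two implications with the $G$-part through Corollary~\ref{cor:charac} and the \cite{MS08} characterisation then yields the lemma. I expect the affine-subspace parity argument of this last paragraph to be the delicate point to write carefully, the rest being the algebraic identity and bookkeeping.
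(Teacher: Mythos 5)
Your proof is correct and follows essentially the same route as the paper's: reduce via the Markham--Sanders characterisation to accessibility in $G\Delta A$, translate that back to $G$ using the identity for $Odd_{G\Delta A}$, and fix the parity of the privacy witness for $\overline B$ by adding an odd-$A$-intersection accessing set of $B$ (the paper's $C':=C\Delta D$ is exactly your affine-subspace step). Your version of the key identity, which keeps the $A\cap D$ term and observes that it vanishes upon intersecting with $\overline B$ because $D\subseteq B$, is in fact slightly more careful than the form stated in the paper.
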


 \begin{proof}
 First notice that for any $X$, if $|X\cap A| = 1 \bmod{2}$ then $Odd_{G\Delta A}(X) = Odd_G(X)\Delta A$. Thus for any $X,Y$, if $|X\cap A| = 1 \bmod{2}$, $Odd_{G\Delta A}(X)\cap Y=\emptyset \iff (Odd_G(X)\Delta A)\cap Y = \emptyset \iff (Odd_G(X)\cap Y)\Delta (A\cap Y)=\emptyset \iff Odd_G(X)\cap Y = A\cap Y$. \\
 $(\Rightarrow)$ Assume that $B$ can reconstruct the quantum secret, so $B$ can reconstruct the classical secret in $G\Delta A$. Thus $\exists D\subseteq B$ s.t. $Odd_{G\Delta A}(D)\cap \overline B=\emptyset$. According to the previous remark, it implies that $Odd_G(D)\cap \overline B=A\cap \overline B$, so $\overline B$ cannot reconstruct the secret.\\
 $(\Leftarrow)$ Assume $\overline B$ cannot recover the classical secret and $B$ can. So $\exists C\subseteq B$ s.t. $Odd_G(C) \cap B=A\cap B$. If $|C\cap A| $ is even, let $C':= C\Delta D$ where $|D\cap A|$ is odd and $Odd_G(D)\cap B = \emptyset$. Such a set $D$ exists since $B$ can reconstruct the classical secret in $G$. If $|C\cap A| $ is odd, then let $C':=C$. In both cases, $|C'\cap A| = 1 \bmod{2}$ and $Odd_G(C')\cap B = A\cap B$, so according to the previous remark, $Odd_{G\Delta A}(C')\cap B= \emptyset$, as a consequence $B$ can access the classical secret in $G\Delta A$.  \hfill $\Box$
 \end{proof}

In any pure quantum secret sharing protocol a set of players can reconstruct a quantum secret if and only if its complement set of players has no information about the secret (see \cite{G00}). As a consequence:

\begin{corollary}
Given a qQSS protocol $(G,A)$, a set $B$ of players has no information about the quantum secret if and only if, in the protocol cQSS $(G,A)$, $\overline B$ can reconstruct the classical secret and $B$ cannot. 
\end{corollary}

Sets of players that can reconstruct the secret and those who have no information about the secret admit simple graphical characterisation thanks to the simple reduction to the classical case. However, contrary to the cQSS case, there is a third kind of set players, those who can have some information about the secret but not enough to reconstruct the secret perfectly. For instance for any $n>1$ consider the qQSS protocol $(K_n,\{v_1,\ldots ,v_n\})$ where $K_n$ is the complete graph on the $n$ vertices $v_1,\ldots v_n$.  For any set $B$ of vertices s.t. $B\neq \emptyset$ and $\overline B\neq \emptyset$,  both $B$ and $\overline B$ cannot reconstruct a classical secret in the corresponding $cQSS$ protocol, so $B$ cannot reconstruct the quantum secret perfectly but has some information about the secret. 

\begin{corollary}
\label{autocomp}
 The  qQSS protocols $(G,A)$ and $(G\Delta A,A)$ have the same accessing structures -- i.e. a set of players can access the secret in $(G,A)$ iff it can access the secret in $(G\Delta A, A)$. In particular, the protocols $(G,V(G))$ and $(\overline G, V(G))$ have the same accessing sets.  
\end{corollary}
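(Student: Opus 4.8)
The plan is to derive this corollary directly from the characterisation of q-accessibility recalled from \cite{MS08}, exploiting the fact that the map $G\mapsto G\Delta A$ is an involution. Recall that \cite{MS08} states that a set $B$ can access the quantum secret in $(G,A)$ if and only if $B$ can reconstruct the classical secret in \emph{both} $cQSS(G,A)$ and $cQSS(G\Delta A,A)$. The symmetry of this condition in the unordered pair of graphs $\{G,\,G\Delta A\}$ is exactly what I would use, so that no further work on the quantum side is needed.

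First I would verify that $\Delta A$ is an involution, i.e. $(G\Delta A)\Delta A = G$. Since $G\Delta A = (V(G), E(G)\Delta(A\times A))$ and the symmetric difference satisfies $X\Delta Y\Delta Y = X$, applying the operation twice with the fixed edge set $A\times A$ leaves $E(G)$ unchanged. This is the only computation in the proof and presents no real difficulty.

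Then I would apply the MS08 characterisation to the protocol $(G\Delta A, A)$: a set $B$ can access the secret in $(G\Delta A, A)$ iff $B$ can reconstruct the classical secret in both $cQSS(G\Delta A, A)$ and $cQSS((G\Delta A)\Delta A, A)$. By the involution property $(G\Delta A)\Delta A = G$, so this second condition is c-accessibility in $cQSS(G\Delta A, A)$ together with $cQSS(G, A)$ -- literally the same pair of conditions obtained for $(G,A)$, merely listed in the other order. Hence $B$ can access the secret in $(G,A)$ iff it can in $(G\Delta A, A)$, which is the first assertion.

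Finally, for the ``in particular'' clause I would specialise to $A = V(G)$. Here $A\times A$ is the edge set of the complete graph $K_n$, so $E(G)\Delta(V(G)\times V(G)) = E(K_n)\setminus E(G) = E(\overline G)$, giving $G\Delta V(G) = \overline G$. Substituting into the general statement yields that $(G, V(G))$ and $(\overline G, V(G))$ share the same accessing sets. The only point requiring any care throughout is the bookkeeping of the involution and the identification $G\Delta V(G)=\overline G$; everything else is immediate from the already-established characterisation of q-accessibility, so I do not expect a genuine obstacle here.
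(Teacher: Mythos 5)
Your proof is correct and follows essentially the same (implicit) route as the paper: the corollary is stated without proof because it is immediate from the quoted characterisation of Markham--Sanders, whose condition is symmetric in the pair $\{G, G\Delta A\}$ once one notes that $\Delta A$ is an involution and that $G\Delta V(G)=\overline G$. No gaps.
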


\subsection{Threshold schemes}\label{sec:qqsss}

For any qQSS protocol $(G,A)$, the accessing structures can be characterized. For secret sharing protocols, this is often interesting to focus on threshold protocols, i.e. protocols such that there exists an integer $k$ such that any set of least $k$ players can reconstruct the secret, whereas any set of at most $k-1$ players have no information about the secret. Such threshold protocols are denoted $((k,n))$. In \cite{G00}, it has been proved that if the dealer is sending a pure quantum state to the players, like in the qQSS protocols, then the threshold, if it exists, should be equal to $\frac{n+1}2$ where $n$ is the number of players. This property which is derived from the no-cloning theorem, is very restrictive. It turns out that there is a unique threshold for which a qQSS protocol is known. This protocol is  $(C_5,\{v_1,\ldots v_5\})$ where $C_5$ is the cycle graph on $5$ vertices. The threshold for this protocol is $3$. In section \ref{sec:lowerbound}, we  prove that under the constraint that $A = V(G)$, there is no threshold qQSS protocol for $n>79$. 

However, in general a qQSS protocol corresponds to   a ramp secret sharing scheme  \cite{Oga05} where any set of players smaller than $n-k$ cannot access the information and any set greater than $k$ can. In this section we show how these ramp schemes can be turned into threshold schemes by adding a classical secret sharing round.

\begin{theorem}
\label{qQSS*}
Given a graph $G$ over $n$ vertices, a non empty subset of vertices $A$, and an integer $k$, such that  $\forall B\subseteq V(G)$ with  $|B|=k$, 
 $\exists C_B,D_B\subseteq B$
satisfying $|D_B\cap A|=1 \bmod 2 $,  $Odd(D_B)\subseteq B$, and  $Odd(C_B)\cap \overline B= A\cap \overline B$, it exists an $((k+c,n+c))$ quantum secret sharing protocol for any $c\ge 0$ in which the dealer  sends   one qubit to $n$ players and uses a $(k+c)$-threshold classical secret sharing scheme on the $n+c$ players. 
\end{theorem}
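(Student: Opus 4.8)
The plan is to reduce the three combinatorial hypotheses to a single statement about the ramp scheme qQSS$(G,A)$, and then to show that a quantum one-time pad whose key is distributed by a classical threshold scheme promotes this ramp behaviour to a true threshold. \emph{Reading the hypothesis.} Fix $B$ with $|B|=k$. Because $D_B\subseteq B$ and $Odd(D_B)\subseteq B$ we have $D_B\cup Odd(D_B)\subseteq B$ with $|D_B\cap A|=1\bmod 2$, so Corollary~\ref{cor:charac} gives that $B$ reconstructs the classical secret in cQSS$(G,A)$. The condition $Odd(C_B)\cap\overline B=A\cap\overline B$ with $C_B\subseteq V\setminus\overline B$ is, by the privacy part of Corollary~\ref{cor:charac} applied to $\overline B$, exactly the statement that $\overline B$ has no information about the classical secret, hence cannot reconstruct it. The quantum-accessibility lemma (a set reconstructs the quantum secret iff it reconstructs the classical secret while its complement cannot) then yields that every $B$ with $|B|=k$ reconstructs the quantum secret in qQSS$(G,A)$.

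\emph{Monotonicity.} I would next extend this to all larger sets. Classical accessibility is upward closed, since $D_B\cup Odd(D_B)\subseteq B\subseteq T$ for any $T\supseteq B$; classical privacy is downward closed, since $\overline T\subseteq\overline B$ and $Odd(C_B)\cap\overline B=A\cap\overline B$ force $Odd(C_B)\cap\overline T=A\cap\overline T$. Hence every set of at least $k$ of the $n$ qubit-holding players reconstructs the quantum secret in qQSS$(G,A)$.

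\emph{The protocol.} The dealer draws two uniform bits $a,b$, applies the one-qubit one-time pad $X^aZ^b$ to the secret $\ket\phi$, and encodes the masked state through qQSS$(G,A)$, sending one qubit to each of $n$ players. The pair $(a,b)$ is then shared, together with $c$ dummy classical-only players, by a $(k+c)$-threshold classical scheme over all $n+c$ players. For reconstruction, a set $S$ with $|S|=k+c$ recovers $(a,b)$ from the classical scheme; moreover at most $c$ members of $S$ are classical-only, so $|S\cap\{\text{qubit players}\}|\ge k$ and by monotonicity these players recover the masked state, after which $X^aZ^b$ is undone to return $\ket\phi$.

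\emph{Privacy, and the main obstacle.} For $|S|<k+c$ the classical scheme leaks nothing about $(a,b)$, so $S$'s classical shares are statistically independent of the key. The crux is to show that the \emph{joint} classical-quantum state held by $S$ is then independent of $\phi$. I would use that $\phi\mapsto\ket{G_\phi}$ is a linear isometry $V$, so averaging the encoded masked state over the key gives $\tfrac14\sum_{a,b}V X^aZ^b\ket\phi\bra\phi Z^bX^a V^\dagger=V(I/2)V^\dagger=\tfrac12(\ket{G_0}\bra{G_0}+\ket{G_1}\bra{G_1})$, a fixed code state whose reduction to $S$ is $\phi$-independent. The delicate point is that $S$ may hold a reduced quantum state that genuinely depends on the key $(a,b)$; one must combine the key-independence of $S$'s classical shares with this averaging to conclude that the cq-state factorizes and loses all dependence on $\phi$. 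This privacy argument, rather than the reconstruction side, is the step I expect to require the most care.
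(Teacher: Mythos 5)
Your proposal is correct and follows essentially the same route as the paper: the same protocol (quantum one-time pad, graph-state encoding of the masked qubit, a $(k+c)$-threshold classical scheme for the key over $n+c$ players of whom $c$ hold no qubit), with reconstruction for $|S|\ge k+c$ reduced via the combinatorial hypotheses and the quantum-accessibility lemma, and privacy for $|S|<k+c$ from the classical scheme plus averaging the pad. You are in fact somewhat more careful than the paper on two points it leaves implicit — the monotonicity needed to pass from $|B|=k$ to $|B|\ge k$, and the factorization of the cq-state of a small set (which does go through exactly as you sketch, since equality of the conditional share distributions over all key values gives independence from the key, after which the pad average $\frac14\sum_{a,b}X^aZ^b\ket\phi\bra\phi Z^bX^a=I/2$ kills the $\phi$-dependence) — while the paper is more explicit than you on the reconstruction operators $U_{D_B}$ and $\Lambda_{V_{C_B}}$.
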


The rest of the section is dedicated to define a family of protocols called $qQSS^*$ satisfying the theorem.

Inspired by the work of Broadbent, Chouha and Tapp \cite{Tapp}, we extend the qQSS scheme adding a classical reconstruction part. In \cite {Tapp}, 
a family of unanimity -- i.e. the threshold is the number of players -- quantum secret sharing protocols have been introduced. They use a $GHZ$ state which equivalent to the graph state $\ket {K_n}$ where $K_n$ is the complete graph on $n$ vertices. We extend this construction to any graph, using also a more  general initial encryption corresponding to a quantum one-time pad of the quantum secret.

%\begin{framed}
\noindent {\bf Quantum secret sharing with graph states and classical reconstruction ($qQSS^*$).}
Given a graph $G$, a non empty subset of players $A$, and an integer $k$, such that $\forall B\subseteq V(G)$ if $|B|\ge k$ then 
 $\exists C_B,D_B\subseteq B$ 
satisfying $|D_B\cap A|=1 \bmod 2 $,  $Odd(D_B)\subseteq B$, and  $Odd(C_B)\cap \overline B= A\cap \overline B$. \\ Suppose the dealer wishes to share the quantum secret $\ket \phi = \alpha \ket 0+\ket 1$. 
\begin{itemize}
\item {\bf Encryption.} The dealer chooses uniformly at random $b_x,b_z\in \{0,1\}$. and apply $X^{b_x}Z^{b_z}$ on $\ket \phi$.% where $X : \ket x \mapsto \ket{\bar x}$ and $Z:\ket x\mapsto (-1)^x\ket x$. 
The resulting state is $\ket {\phi'} = \alpha \ket{b_x}+\beta (-1)^{b_z}\ket {\overline {b_x}}$. 
\item {\bf Graph state embedding.} The dealer embeds $\ket {\phi'}$ to the $n$-qubit state $\alpha \ket {G_{b_x}}+\beta (-1)^{b_z} |G_{\overline{b_x}}\rangle$. 
\item {\bf Distribution.} The dealer sends each player $i$ the qubit $q_i$. Moreover using a classical secret sharing scheme with a threshold $k$, the dealer shares the bits $b_x, b_z$. 
\item {\bf Reconstruction.} The reconstruction of the secret  for a set $B$ of players s.t. $|B|\ge k$ is in $3$ steps: first the set $D_B$ is used to add an ancillary qubit and put the overall system in an appropriate state; then $C_B$ is used to disentangled the ancillary qubit form the rest of the system; finally the classical bits $b_x$ and $b_z$ are used to recover the secret:
\subitem{(a)} %Let $D\subseteq B$ s.t. $Odd(D)\subseteq B$. 
The players in $B$ applies on their qubits  the isometry $U_{D_B}:=\ket 0 \otimes P_0 + \ket 1\otimes P_1$ where $P_i$ are the projectors associated with observable $\mathcal O_{D_B} = (-1)^{|D_B\cap Odd(D_B)|}X_{D_B}Z_{Odd({D_B})}$, i.e. $P_i := \frac{I+(-1)^i \mathcal O_{D_B}}2$. The resulting state is $\alpha\ket {b_x}\otimes \ket{G_{b_x}} + \beta.(-1)^{b_z}\ket {\overline {b_x}}\otimes | {G_{\overline{b_x}}}\rangle$.
\subitem{(b)} %For disentangling the ancillary qubit from the rest of the state, t
The players in $B$ apply the controlled unitary map $\Lambda_{V_{C_B}} = \ket 0\bra 0\otimes I + \ket 1\bra 1\otimes V_{C_B}$, % with $C\subseteq B$ s.t. $Odd(C)\cap \overline B= A\cap \overline B$, 
where  $V_C:=(-1)^{|C\cap Odd(C)|}X_C Z_{Odd(C)\Delta A}$. %Since $Odd(C)\cap \overline B = A\cap \overline B$, the unitary map $\Lambda V_C$ does not act on the qubits in $\overline B$. Moreover $V_CZ_A\ket G = \ket G$, so 
The resulting state is $\alpha \ket {b_x}\otimes \ket {G} + \beta .(-1)^{b_z}\ket {\overline {b_x}}\otimes \ket {G} = \left(\alpha \ket {b_x}+\beta.(-1)^{b_z}\ket {\overline {b_x}}\right)\otimes \ket {G}$. 
\subitem{(c)} Thanks to the classical secret sharing scheme, the players in $B$ recover the bits $b_x$ and $b_z$. They apply $X^{b_x}$ and then $Z^{b_z}$ for reconstructing the quantum secret $\alpha \ket 0 +\beta \ket 1$ on the ancillary qubit. 
\end{itemize}

Note that this reconstruction method can be used for the qQSS protocols defined in \cite{KMMP} and for which the reconstruction part was not explicitly defined.

\begin{lemma}
A $qQSS^*$ protocol $(G,A,k)$ is a $((k,|V(G)|))$ secret sharing protocol if for any $B\subseteq V(G)$ s.t. $|B|\ge k$, $B$ can reconstruct the secret in $qQSS (G,A)$.
\end{lemma}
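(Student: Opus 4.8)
The plan is to verify the two defining properties of a $((k,n))$ scheme separately, where $n=|V(G)|$: \emph{accessibility} of every set $B$ with $|B|\ge k$, and \emph{privacy} of every set $B$ with $|B|<k$. For accessibility, the first step is to turn the hypothesis into the combinatorial data the reconstruction needs. If $|B|\ge k$ then by assumption $B$ reconstructs the secret in $qQSS(G,A)$; by the q-accessibility lemma of Section~\ref{sec:qqss} together with Corollary~\ref{cor:charac}, this is equivalent to $B$ accessing the classical secret of $cQSS(G,A)$ while $\overline B$ does not. The former yields $D_B\subseteq B$ with $Odd(D_B)\subseteq B$ and $|D_B\cap A|=1\bmod 2$, and the latter yields $C_B\subseteq B$ with $Odd(C_B)\cap\overline B=A\cap\overline B$ -- exactly the sets used in the protocol. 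I would then check that the three reconstruction steps produce the states claimed. Step (a) rests on the identity $\mathcal O_{D_B}\ket{G_s}=(-1)^{s|D_B\cap A|}\ket{G_s}=(-1)^s\ket{G_s}$, obtained by commuting $\mathcal O_{D_B}$ past $Z_A^s$ (they anticommute exactly on $D_B\cap A$) and using the fixpoint property $\mathcal O_{D_B}\ket G=\ket G$; this makes $P_i\ket{G_s}=\delta_{i,s}\ket{G_s}$, so $U_{D_B}$ coherently copies the label $b_x$ into the ancilla. Since $Odd(D_B)\subseteq B$, the operator acts only on the qubits of $B$, so $B$ can indeed apply it.

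Step (b) is the delicate one. The key is the factorisation $V_{C_B}=\mathcal O_{C_B}Z_A$, which follows from $Z_{Odd(C_B)\Delta A}=Z_{Odd(C_B)}Z_A$. This gives $V_{C_B}\ket{G_1}=\mathcal O_{C_B}Z_AZ_A\ket G=\mathcal O_{C_B}\ket G=\ket G$, so the controlled map $\Lambda_{V_{C_B}}$ disentangles the graph register and leaves the ancilla in $\ket{\phi'}=\alpha\ket{b_x}+\beta(-1)^{b_z}\ket{\overline{b_x}}$ tensored with $\ket G$. One must also check that $V_{C_B}$ is implementable by $B$: from $Odd(C_B)\cap\overline B=A\cap\overline B$ one gets $(Odd(C_B)\Delta A)\cap\overline B=\emptyset$, so $V_{C_B}$ acts only on $B$. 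Finally, step (c) uses the classical $k$-threshold scheme, available because $|B|\ge k$, to recover $b_x,b_z$; applying $X^{b_x}$ then $Z^{b_z}$ removes the one-time pad and returns $\ket\phi$.

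For privacy, let $|B|<k$. Since the classical scheme has threshold $k$, $B$ obtains no information about $b_x,b_z$, which are uniform and independent; hence the state seen by $B$ is the average
\[
\rho_B=\tfrac14\sum_{b_x,b_z\in\{0,1\}} tr_{\overline B}\big(\ket{\Psi_{b_x,b_z}}\bra{\Psi_{b_x,b_z}}\big),\qquad \ket{\Psi_{b_x,b_z}}=\alpha\ket{G_{b_x}}+\beta(-1)^{b_z}\ket{G_{\overline{b_x}}}.
\]
Summing over $b_z$ kills the two coherence terms, since each carries a factor $(-1)^{b_z}$, and summing over $b_x$ symmetrises the diagonal so the secret-dependence collapses through $|\alpha|^2+|\beta|^2=1$, giving $\rho_B=\tfrac12\,tr_{\overline B}(\ket{G_0}\bra{G_0}+\ket{G_1}\bra{G_1})$, which is manifestly independent of $\ket\phi$. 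Thus $B$ has no information about the secret, and this holds for every $B$ with $|B|<k$ irrespective of the underlying ramp structure.

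The linear-algebra identities and the averaging computation are routine. The main obstacle is step (b), since it is where the condition $Odd(C_B)\cap\overline B=A\cap\overline B$ must be used twice with a single set $C_B$: once to guarantee that $V_{C_B}$ is local to $B$, and once, through the factorisation $V_{C_B}=\mathcal O_{C_B}Z_A$, to guarantee that it actually disentangles the ancilla. Checking that the Pauli signs combine correctly and that the same $C_B$ serves both roles is the part that requires the most care.
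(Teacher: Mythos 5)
Your proof is correct and follows essentially the same route as the paper: derive the sets $D_B$ and $C_B$ from the q-accessibility characterisation, verify that $U_{D_B}$ copies the label into the ancilla and that $\Lambda_{V_{C_B}}$ disentangles it (both operators being local to $B$), and let the classical threshold scheme handle both the final correction and privacy. You simply spell out details the paper leaves implicit, notably the identity $V_{C_B}=\mathcal O_{C_B}Z_A$ and the one-time-pad averaging that makes $\rho_B$ independent of the secret for $|B|<k$.
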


\begin{proof}
The classical encoding ensures that any set of size smaller then $k$ cannot access to the secret. %Given a set $B\ge k$, a player $u$ in $B$ is chosen at random. All the other players send its qubit to $u$.
$\mathcal O_{D_B}$ is acting on the qubits $D_B\cap Odd(D_B)\subseteq B$. Moreover $P_i\ket {G_s} = \ket {G_s}$ if $i=s$ and $0$ otherwise, so the application of the isometry $U_{D_B}$ produce the state $\alpha\ket {b_x}\otimes \ket{G_{b_x}} + \beta.(-1)^{b_z}\ket {\overline {b_x}}\otimes | {G_{\overline{b_x}}}\rangle$. Regarding step $b$ of the reconstruction, 
since $Odd(C) \cap \overline B = A\cap \overline B$, 
$C\cup Odd(C)\Delta A \subseteq B$
 $V_C$ is acting on the qubits 
 in $B$. 
Moreover $V_C$ produces the states  $ \left(\alpha \ket {b_x}+\beta.(-1)^{b_z}\ket {\overline {b_x}}\right)\otimes \ket {G}$. Finally the classical secret scheme guarantees that the players in $B$ have access to $b_x$ and $b_z$ so that they reconstruct the secret. 
\hfill $\Box$
\end{proof}

\noindent{\bf Proof of Theorem \ref{qQSS*}.} 
The correctness of the qQSS* protocol implies   that given a graph $G$ over $n$ vertices, a non empty subset of vertices $A$, and an integer $k$, such that  $\forall B\subseteq V(G)$ with  $|B|=k$, $\exists C_B,D_B\subseteq B$
satisfying $|D_B\cap A|=1 \bmod 2 $,  $Odd(D_B)\subseteq B$, and  $Odd(C_B)\cap \overline B= A\cap \overline B$, it exists a $((k,n)$ protocol. In order to finish the proof of Theorem \ref{qQSS*} this protocol is turned into a $((k+c,n+c))$ protocol for any $c\ge 0$  the qQSS* protocol is modified as follows, following the technique used in \cite{NMH}. During the distribution stage, the dealer shares $b_x$ and $b_z$ with all the $n+c$ players with a threshold $k+c$, but sends a qubit of the graph state to only $n$ players chosen at random among the $n+c$ players. During the reconstruction, a set of $k+c$ players must contain at least $k$ players having a qubit. This set use the reconstruction steps $a$ and $b$ and then the last step $c$ is done by all the $k+c$ players. $\hfill \Box$
%\end{proof}

In the following we focus on the particular case where $A=V(G)$.

\section{Building $((n-n^{0.68},n))$-$qQSS^*$ Protocols }\label{sec:c5}

We give a construction of an infinite family of quantum secret sharing schemes $\left( \left( k,n \right) \right)$ where $k=n-n^{\frac{log(3)}{log(5)}} < n-n^{0.68}$. %\mtodo{inferieur  au lieu de approx?}
This construction can be defined recursively from cycle over 5 vertices ($C_5$)  which has been used in Markham and Sanders  \cite{MS08} to build a ((3,5)) quantum secret sharing protocol. 

We define a composition law $*$ between two graphs $G_1 = (V_1, E_1)$ and $G_2 = (V_2, E_2)$ as follows:

\begin{definition}
	Let $G_1 = (V_1, E_1)$ and $G_2 = (V_2, E_2)$ be two graphs.
	The lexicographic product $*$ of two graphs $G_1$ and $G_2$ is defined as follows:
	$V(G_1*G_2) := V_1\times V_2$ and $E(G_1*G_2):=\{(( u_1, u_2), ( v_1, v_2))~|~ (u_1,v_1)\in E_1 \text {~or~} ( u_1 =v_1 \wedge (u_2,v_2)\in E_2)\}$.

\end{definition}

In other terms, the graph $G$ is a graph $G_1$ which vertices are replaced by copies of the graph $G_2$, and which edges are replaced by complete bipartitions between two copies of the graph $G_2$ (Figure \ref{Composition}).

\begin{figure}[h!]
	%\begin{center}
		\[\vcenter{\begin{tikzpicture}[scale=0.4]
		\tikzstyle{vertex}=[circle,fill=black!25,minimum size=12pt,inner sep=0pt]
			\node[vertex] (v1) at (0,1){1};
			\node[vertex] (v2) at (2,1){2};
			\node[vertex] (v3) at (4,0){3};
			\node[vertex] (v4) at (5,2){4};
			\draw[thick] (v1) -- (v2) -- (v3);
			\draw[thick] (v2) -- (v4);
			\node[] (composition) at (7,1){ $*$};
			\node[vertex] (w1) at (9,0){1};
			\node[vertex] (w2) at (10,2){2};
			\node[vertex] (w3) at (11,0){3};
			\draw[thick] (w1) -- (w2) -- (w3) -- (w1);
		\end{tikzpicture}} \hspace{-8cm}\vcenter{=}\hspace{-12.5cm}\vcenter{
		%{\LARGE $$=$$}
		\begin{tikzpicture}[scale=0.4]
		\tikzstyle{vertex}=[circle,fill=black!25,minimum size=12pt,inner sep=0pt]
			\node[vertex] (v11) at (0,2){1,1};
			\node[vertex] (v12) at (1,4){1,2};
			\node[vertex] (v13) at (2,2){1,3};
			\node[] (u1) at (2,3) {};
			\node[] (u2) at (6,3) {};
			\node[vertex] (v21) at (6,2){2,1};
			\node[vertex] (v22) at (7,4){2,2};
			\node[vertex] (v23) at (8,2){2,3};
			\node[] (u3) at (8,3) {};
			\node[] (u4) at (12,1) {};
			\node[vertex] (v31) at (12,0){3,1};
			\node[vertex] (v32) at (13,2){3,2};
			\node[vertex] (v33) at (14,0){3,3};
			\node[] (u5) at (14,5) {};
			\node[vertex] (v41) at (14,4){4,1};
			\node[vertex] (v42) at (15,6){4,2};
			\node[vertex] (v43) at (16,4){4,3};
			\draw[thick] (v11) -- (v12) -- (v13) -- (v11);
			\draw[thick] (v21) -- (v22) -- (v23) -- (v21);
			\draw[thick] (v31) -- (v32) -- (v33) -- (v31);
			\draw[thick] (v41) -- (v42) -- (v43) -- (v41);
			\draw[line width=1mm] (u1) -- (u2);
			\draw[line width=1mm] (u3) -- (u4);
			\draw[line width=1mm] (u3) -- (u5);
		\end{tikzpicture}}\]
	%\end{center}
\caption{Graphical explanation of the composition law $*$ between two graphs.
	       A thick line represents a complete bipartition between two triangle graphs.}
\label{Composition}
\end{figure}
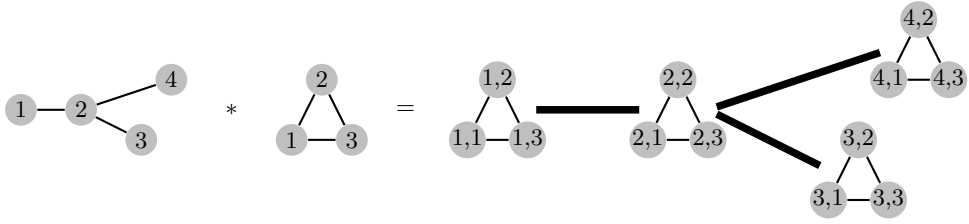

\begin{lemma}
\label{complProduct} For any graphs $G_1$ and $G_2$, $\overline{G_1 * G_2} = \overline{G_1}*\overline{G_2}$
\end{lemma}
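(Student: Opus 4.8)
The plan is to prove the identity by observing that both sides are graphs on the common vertex set $V_1\times V_2$, so it suffices to show their edge sets coincide, which I would do by a direct case analysis on the first coordinate. Since both the lexicographic product and the complement operation leave the vertex set unchanged, $\overline{G_1*G_2}$ and $\overline{G_1}*\overline{G_2}$ are automatically graphs on $V_1\times V_2$; thus I only need to compare adjacencies. I would fix two distinct vertices $(u_1,u_2)$ and $(v_1,v_2)$ and determine, on each side, exactly when they form an edge.

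For the left-hand side, I would expand the definition of non-adjacency in $G_1*G_2$: the two vertices are \emph{not} adjacent precisely when $\neg\big((u_1,v_1)\in E_1 \vee (u_1=v_1 \wedge (u_2,v_2)\in E_2)\big)$, which by De Morgan equals $(u_1,v_1)\notin E_1 \wedge \neg(u_1=v_1 \wedge (u_2,v_2)\in E_2)$. The natural next step is to split on whether $u_1=v_1$. If $u_1\neq v_1$, the second conjunct is vacuously true and the condition collapses to $(u_1,v_1)\notin E_1$, i.e. $(u_1,v_1)\in E(\overline{G_1})$. If $u_1=v_1$, then $u_2\neq v_2$ because the two vertices are distinct, the first conjunct holds automatically since $G_1$ has no loop, and the condition collapses to $(u_2,v_2)\notin E_2$, i.e. $(u_2,v_2)\in E(\overline{G_2})$.

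For the right-hand side, I would read off adjacency in $\overline{G_1}*\overline{G_2}$ directly from the product definition: the two vertices are adjacent iff $(u_1,v_1)\in E(\overline{G_1})$ or $u_1=v_1 \wedge (u_2,v_2)\in E(\overline{G_2})$. Splitting on the same two cases yields $(u_1,v_1)\in E(\overline{G_1})$ when $u_1\neq v_1$ and $(u_2,v_2)\in E(\overline{G_2})$ when $u_1=v_1$, matching exactly the two conditions obtained above for non-adjacency in $G_1*G_2$. Hence an edge is present on one side iff it is present on the other, which is the claim.

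The computation is essentially routine; the only point requiring care is the bookkeeping around distinctness and loops. Specifically, non-adjacency in a complement is only meaningful for distinct endpoints, and when $u_1=v_1$ one must invoke the hypothesis $(u_1,u_2)\neq(v_1,v_2)$ to conclude $u_2\neq v_2$, so that comparing the second coordinates against $E(\overline{G_2})$ is legitimate. Threading this distinctness condition consistently through both the negated disjunction on the left and the disjunction defining the product on the right is the main obstacle to a fully rigorous argument.
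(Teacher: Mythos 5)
Your proof is correct. It differs from the paper's in how the second half is handled: the paper establishes only the inclusion $E(\overline{G_1*G_2})\subseteq E(\overline{G_1}*\overline{G_2})$ by the same case split on whether $u_1=v_1$, and then concludes equality by counting edges, checking that $|E(\overline{G_1}*\overline{G_2})|=n_1|E(\overline{G_2})|+n_2^2|E(\overline{G_1})| = n_1n_2(n_1n_2-1)/2-(n_1|E(G_2)|+n_2^2|E(G_1)|)=|E(\overline{G_1*G_2})|$. You instead run the case analysis as a genuine biconditional, showing that non-adjacency in $G_1*G_2$ and adjacency in $\overline{G_1}*\overline{G_2}$ reduce to the same condition in each of the two cases. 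Your route is more self-contained and avoids the edge-count computation entirely, at the price of the loop/distinctness bookkeeping you correctly flag (in particular, when $u_1=v_1$ one needs both that $G_1$ is loopless, so the first conjunct on the left holds vacuously, and that $\overline{G_1}$ is loopless, so the first disjunct on the right is vacuously false); the paper's counting step trades that logical care for a short arithmetic verification. Both are complete proofs.
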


\begin{proof}
	It is easy to see that $V(\overline{G_1 * G_2}) = V(\overline{G_1}*\overline{G_2})$.
	We want to show that $E(\overline{G_1 * G_2}) = E(\overline{G_1}*\overline{G_2})$.
	
	Consider an edge $\big( (u_1,u_2),(v_1,v_2) \big) \in E(\overline{G_1 * G_2})$.
	If $u_1 = v_1$ then $(u_2,v_2) \notin E_2$.
	Otherwise $(u_1, v_1) \notin E_1$.
	Thus
		$( u_1, v_1) \in \overline{E_1}$  or  $( u_1 = v_1 \text{ and } ( u_2, v_2) \in \overline{E_2})$
	which means, by definition, that $\big( (u_1,u_2),(v_1,v_2) \big) \in E(\overline{G_1} *\overline{ G_2})$.
	Therefore, $E(\overline{G_1* G_2}) \subseteq E(\overline{G_1} *\overline{ G_2})$.
	
	Furthemore $|E(\overline{G_1} *\overline{ G_2})|=n_1 |E(\overline{G_2})| + n_2^2 |E(\overline{G_1})|$ where
	$n_1 = |V_1|$ and $n_2 = |V_2|$. 
	Thus $|E(\overline{G_1} *\overline{ G_2})|=n_1 n_2 (n_1 n_2-1)/2 - ( n_1 E(G_2) + n_2^2 E(G_1))= |E(\overline{G_1* G_2})|$. Therefore $E(\overline{G_1 * G_2}) = E(\overline{G_1}*\overline{G_2})$.	\hfill $\Box$
\end{proof}

\begin{lemma}
\label{ProductThreshold}
	Let $G_1$, $G_2$ be two graphs such that $G_1 = (V_1,E_1)$ realizes an $(( k_1, n_1 ))$ $qQSS^*$ protocol and $G_2 = (V_2,E_2)$ realizes a $(( k_2, n_2 ))$ $qQSS^*$protocol. The graph $G = G_1*G_2 = (V,E)$ realizes an$ (( k,n )) $ $qQSS^*$ protocol where
	%\begin{align}
	$$
	\left\{
	\begin{array}{l}
	n = n_1n_2 \\
	k = n_1n_2 - (n_1-k_1+1)(n_2-k_2+1) +1
	\end{array}
	\right.
	$$
	%\end{align}
\end{lemma}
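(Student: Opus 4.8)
The plan is to discard the quantum description entirely and argue on the combinatorial characterisation of reconstruction. Since we are in the case $A=V(G)$, Corollary~\ref{cor:charac} together with the reduction of $q$-accessibility to the classical case shows that a set $B$ reconstructs the quantum secret if and only if \emph{(1)} there is $D\subseteq B$ with $|D|$ odd and $Odd(D)\subseteq B$, and \emph{(2)} there is $C\subseteq B$ with $\overline B\subseteq Odd(C)$. Reading ``realizes a $((k,n))$ protocol'' as ``every $B$ with $|B|\ge k$ satisfies (1) and (2), and $k$ is least possible'', I must prove an achievability half and an optimality half. The workhorse for both is the following consequence of $N((u_1,u_2))=(N_{G_1}(u_1)\times V_2)\cup(\{u_1\}\times N_{G_2}(u_2))$: writing $D_{u}:=\{w: (u,w)\in D\}$ for the fibre of $D$ over $u$ and $\widehat D:=\{u_1: |D_{u_1}|\text{ odd}\}$, a count modulo $2$ gives $(v_1,v_2)\in Odd(D)\iff[v_1\in Odd_{G_1}(\widehat D)]\oplus[v_2\in Odd_{G_2}(D_{v_1})]=1$. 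Every condition below is checked fibre by fibre through this identity.

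First I would prove achievability. Put $\nu_i=n_i-k_i+1$, so $|B|\ge k$ gives $|\overline B|\le\nu_1\nu_2-1$. Call a fibre $v_1$ \emph{heavy} if $|\overline{B}_{v_1}|\ge\nu_2$; then $(\#\text{heavy})\cdot\nu_2\le|\overline B|<\nu_1\nu_2$, so at most $\nu_1-1$ fibres are heavy and at least $n_1-\nu_1+1=k_1$ are \emph{light}, each light fibre containing at least $n_2-\nu_2+1=k_2$ vertices of $B$. By the hypothesis on $G_2$ every light fibre (as a subset of $V_2$) reconstructs, and by the hypothesis on $G_1$ the set of light fibres (as a subset of $V_1$) reconstructs. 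I would then assemble the global witnesses $D$ and $C$ of (1)--(2): take a $G_1$-witness supported on light fibres, and over each light fibre install either a $G_2$-access witness or a $G_2$-covering witness, the choice being forced by whether $v_1\in Odd_{G_1}(\widehat D)$ through the identity above.

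The optimality half exhibits a non-reconstructing set of size $k-1$. For each factor choose $S_i\subseteq V_i$ with $|S_i|=\nu_i$ whose complement is a largest non-reconstructing set of $G_i$, and put $B^{\ast}=V\setminus(S_1\times S_2)$, which has exactly $n_1n_2-\nu_1\nu_2=k-1$ vertices. Since reconstruction in $G_i$ coincides with reconstruction in $\overline{G_i}$ (Corollary~\ref{autocomp}) and $\overline{G_1\ast G_2}=\overline{G_1}\ast\overline{G_2}$ (Lemma~\ref{complProduct}), the failure of $B^{\ast}_i:=\overline{S_i}$ to reconstruct is always certified by a covering witness $C_i\subseteq S_i$ with $\overline{S_i}\subseteq Odd(C_i)$, in $G_i$ or in $\overline{G_i}$. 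According to which of these holds in each factor I would compute in $G_1\ast G_2$ or in $\overline{G_1}\ast\overline{G_2}$ and glue the $C_i$, via the same identity, into a single $C\subseteq S_1\times S_2$ with $B^{\ast}\subseteq Odd(C)$, so that $B^{\ast}$ fails (1) and hence does not reconstruct.

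The main obstacle in both halves is parity bookkeeping, and I expect this to be where the real work lies. The odd-cardinality requirement in (1) does not decouple across fibres, because the pattern $\epsilon=Odd_{G_1}(\widehat D)$ that tells each fibre whether it must satisfy a covering condition ($\overline{B}_{v_1}\subseteq Odd_{G_2}(\cdot)$) or an avoidance condition ($Odd_{G_2}(\cdot)\subseteq B_{v_1}$) is itself governed by the per-fibre parities $|D_{v_1}|\bmod 2$ that define $\widehat D$. I would isolate a sublemma stating that every reconstructing fibre carries simultaneously a covering witness $c$ and an odd access witness $d$ (with $Odd_{G_2}(d)\subseteq B_{v_1}$); replacing $c$ by $c\,\Delta\,d$ flips the parity of that fibre without spoiling the covering property, which lets me match $\widehat D$ to the pattern dictated by $G_1$. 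Threading this consistency condition through the two simultaneous requirements (1) and (2), and keeping the product-complement case distinction of the optimality half aligned with it, is the delicate point; the counting argument and the neighbourhood identity are routine by comparison.
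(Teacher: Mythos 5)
Your achievability argument is essentially the paper's own proof: the heavy/light fibre count is exactly the paper's argument that $|B_1|\ge k_1$, your mod-$2$ neighbourhood identity is the paper's computation of $\left|\mathcal{N}_G(v)\cap D_B\right|$, and your parity-flipping sublemma (replace $c$ by $c\,\Delta\,d$ to fix the fibre parity without spoiling covering) is precisely the paper's $C_2^0/C_2^1$ construction, combined with the same four-way partition of $V_1$ according to $D_1$ and $Odd(D_1)$. Two deviations are worth noting. First, for the global covering witness $C_B$ you propose a second direct gluing; the paper instead obtains $C_B$ for free by running the already-built $D$-witness construction in $\overline{G_1}\ast\overline{G_2}=\overline{G_1\ast G_2}$ (Lemma \ref{complProduct}) together with Corollary \ref{autocomp}. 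Your direct route does work, at the cost of repeating the parity bookkeeping with the roles of covering and avoidance exchanged. Second, your ``optimality half'' answers a question the lemma does not ask: ``realizes a $((k,n))$ $qQSS^*$ protocol'' only requires that every set of size at least $k$ carries the witnesses $C_B,D_B$; privacy below the threshold is enforced by the classical one-time-pad layer, not by the graph, and $k$ is nowhere claimed to be minimal. That half is also the shakiest part of your sketch --- the non-access certificates for $\overline{S_1}$ and $\overline{S_2}$ may live in different graphs ($G_i$ versus $\overline{G_i}$), and the mixed case would force you into $G_1\ast\overline{G_2}$, which is neither the product nor its complement --- so it is fortunate that it can simply be dropped.
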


\begin{proof}
	First we  show that if   $G_1 = (V_1,E_1)$ realizes an $(( k_1, n_1 ))$ $qQSS^*$ protocol and $G_2 = (V_2,E_2)$ realizes an $(( k_2, n_2 ))$ $qQSS^*$, then, in the graph $G = G_1*G_2 = (V,E)$, for   any set $B \subseteq V$ of size $k$ (with $k= n_1n_2 - (n_1-k_1+1)(n_2-k_2+1) +1$ it exists a set $D_B$ such that $|D_B|=1 \bmod 2 $,  $Odd(D_B)\subseteq B$.
	
	%\stodo{c'est pas ca la definition de $(k,n)$ et de toute facon on ne la pas donnee formellement. Il faut revoir la formulation}

	For any set $B \subseteq V$ and any vertex $v_1 \in V$, let $B_2(v_1) = \{ v_2 \in V_2 ~s.t.~ (v_1, v_2) \in B \}$ and $B_1= \{ v_1 \in V_1 ~s.t.~ |B_2(v_1)| \geq k_2 \}$.
	
	We claim that for all set $B \subseteq V$ of size $|B| = k$, the size of the set $B_1$ verifies $|B_1| \geq k_1$.
	
	By contradiction, notice that $B=\bigcup_{v_2 \in B_2(v_1), v_1 \in V_1} \{ (v_1,v_2)\}$. Therefore:
	$|B| = |V|- \sum_{v_1 \in B_1}  |V_2 \setminus B_2(v_1)| - \sum_{v_1 \in {V_1 \setminus B_1}}  |V_2 \setminus B_2(v_1)|$.
	Thus $|B| \le  n_1n_2 - |V_1 \setminus B_1| (n_2-k_2+1) \le k-1$  if $|B_1|\le k_1$.

	Now we consider any set $B \subseteq V$ of size $|B| = k$. As $|B_1| \geq k_1$,  it exists a set $ D_1 \subseteq B_1$ with  $|D_1| = 1 \bmod{2}$ and $D_1 \cup Odd(D_1) \subseteq B_1$. 
	
	Furthermore for any $v_1\in B_1$, $|B_2(v_1)|\ge k_2$   and thus there exists   $D_2(v_1) \subseteq B_2(v_1))$ with $|D_2(v_1)| = 1 \bmod{2}$ and $D_2(v_1) \cup Odd(D_2(v_1)) \subseteq B_2(v_1)$ and there exist $C_2(v_1) \subseteq B_2(v_1)$ with  $ V_2 \setminus B_2(v_1) \subseteq Odd(C_2(v_1)))$.
	
	Let $C_2^0(v_1)= C_2(v_1)$ if $| C_2(v_1)| = 0 \bmod{2}$ and $C_2^0(v_1)\Delta  D_2(v_1)$ otherwise, and let  $ C_2^1(v_1)= C_2^0(v_1)\Delta D_2(v_1)$.

	We partition $V_1$ in 4 subsets and define for any vertex $v_1$ a set $S_2(v_1)\subseteq V_2$ as follows
	$$
	\left\{
	\begin{array}{lll}
	\text{If } v_1\in D_1 \cap Even(D_1) &,& S_2(v_1)=D_2(v_1)\\
	\text{If } v_1\in D_1 \cap Odd(D_1) &,& S_2(v_1)=C_2^1(v_1))\\
	\text{If } v_1\in V_1 \setminus D_1 \cap Even(D_1) &,& S_2(v_1)=\emptyset \\
	\text{If } v_1\in V_1 \setminus D_1\cap Odd(D_1) &,& S_2(v_1)=C_2^0(v_1)\\
	\end{array}
	\right.
	$$
	
		\noindent Consider the set $D_B = \bigcup_{v_1 \in V_1} \{v_1\} \times S_2(v_1)$, $D_B \subseteq B$ and $|D_B| = $ \\$ \sum_{v_1 \in D_1 \cap Even(D_1)} |D_2(v_1)| + \sum_{v_1 \in D_1 \cap Odd_(D_1)} |C_2^1(v_1)| $ $+ \sum_{v_1 \in V_1 \setminus D_1 \cap Odd(D_1)} |C_2^0(v_1)| $. 	Therefore $|D_B|= |D_1| =1\bmod{2}$.
		
	For each $v = (v_1,v_2) \in V \setminus B$, %we want to compute the length of the set $\mathcal{N}_{G}( v) \cap D $.
		$\left| \mathcal{N}_{G}( v) \cap D_B \right| $ $ = \left| \mathcal{N}_{G_2}( v_2) \cap S_2(v_1) \right| $ \\$+ \sum_{u_1 \in \mathcal{N}_{G_1}( v_1)} \left| S_2(u_1) \right| $.  If $v_1 \in V_1 \setminus D_1$, then $|S_2(v_1)|=0 \bmod{2}$, thus \\ $\left| \mathcal{N}_{G}( v) \cap D_B \right|= \left| \mathcal{N}_{G_2}( v_2) \cap S_2(v_1) \right| + \left| \mathcal{N}_{G_1}( v_1) \cap D_1 \right| \mod{2} $.

	Furthermore, if $v_1\in Even(D_1)$,  $\left| \mathcal{N}_{G_2}( v_2) \cap S_2(v_1) \right| = \left| \mathcal{N}_{G_1}( v_1) \cap D_1 \right|=0  \mod{2} $   and if $v_1\in Odd(D_1)$,  $\left| \mathcal{N}_{G_2}( v_2) \cap S_2(v_1) \right| = \left| \mathcal{N}_{G_1}( v_1) \cap D_1 \right| =1 \mod{2}$.
	  
	%Using  last column of the array of Figure \ref{partitionV1} 
	
	Therefore  $\left| \mathcal{N}_{G}( v) \cap D_B \right|=0 \bmod{2}$ which implies that  $D_B \cup Odd(D_B) \subseteq B$.%: each set of vertices of $G$ of size $k$ is a c-accessing set.

	 Furthermore, using  Lemma \ref{complProduct}, we have $\overline{G_1 * G_2} = \overline{G_1}*\overline{G_2}$.
	And from corollary \ref{autocomp}, as $G_1$ and $G_2$ realize $qQSS*$ protocols $\overline{G_1}$ and $\overline{G_2}$ have the same threshold .Therefore,  in $\overline{G_1*G_2}$ it exists a set $D'_B$ such that its odd neighborhood in the complementary graph satisfies $Odd_{\overline{G_1*G_2}}(D'_B)\cap V\setminus B=\emptyset$   thus $Odd_{G_1*G_2}(D'_B)\cap V\setminus B=V\setminus B$ and $D'_B$ is a valid $C_B$ to define an $((k,n)) $ $qQSS^*$ protocol.
\hfill $\Box$	
\end{proof}

Let us define $G^{*i} = \underbrace{G * G * \cdots *G}_{i \text{ times}}$.
If $G$ realizes a $((k,n))$ protocol, we want to find $n_i$ and $k_i$ such that $G^{*i}$ realizes a $((k_i,n_i))$ protocol:
\begin{lemma}
\label{G*i}
	Let $G$ be a graph which realizes a $((k,n))$ protocol.
	Then the graph $G^{*i}$ realizes a $((k_i,n_i))$ protocol where
	$$
	\left\{
	\begin{array}{l}
	n_i = n^i \\
	k_i = n^i - (n-k+1)^i +1
	\end{array}
	\right.
	$$
\end{lemma}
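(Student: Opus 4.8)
The plan is to prove the statement by induction on $i$, using Lemma \ref{ProductThreshold} for the inductive step. The key observation driving the whole argument is that the quantity $n_i-k_i+1$ --- the ``gap'' between the number of players and the threshold --- behaves multiplicatively under the lexicographic product, and in fact equals exactly $(n-k+1)^i$. Once this is recognised, the recursion telescopes cleanly.

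For the base case $i=1$ I would simply note that $G^{*1}=G$, that $n_1=n$, and that $k_1 = n-(n-k+1)+1 = k$, so the claim reduces to the hypothesis that $G$ realizes a $((k,n))$ protocol.

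For the inductive step, assume $G^{*i}$ realizes a $((k_i,n_i))$ protocol with $n_i=n^i$ and $k_i=n^i-(n-k+1)^i+1$. Since the lexicographic product $*$ is associative, I would write $G^{*(i+1)}=G^{*i}*G$ and apply Lemma \ref{ProductThreshold} with $G_1=G^{*i}$ (which realizes $((k_i,n_i))$ by the induction hypothesis) and $G_2=G$ (which realizes $((k,n))$ by assumption). The lemma immediately yields $n_{i+1}=n_i\cdot n=n^{i+1}$ and $k_{i+1}=n_i n-(n_i-k_i+1)(n-k+1)+1$. The crucial computation is then $n_i-k_i+1=n^i-\bigl(n^i-(n-k+1)^i+1\bigr)+1=(n-k+1)^i$, which substituted above gives $k_{i+1}=n^{i+1}-(n-k+1)^{i+1}+1$, completing the induction.

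The step I expect to require the most care is not the algebra, which is routine once the factor $(n-k+1)^i$ is isolated, but rather checking that the hypotheses of Lemma \ref{ProductThreshold} are genuinely satisfied at each stage: one must ensure that $G^{*i}$ is itself a valid $qQSS^*$ protocol (so that the product construction, together with Corollary \ref{autocomp} on complementary graphs used inside the proof of Lemma \ref{ProductThreshold}, applies), which is exactly the content of the induction hypothesis, and that associativity of $*$ legitimately lets us peel off a single factor $G$ in order to reduce $G^{*(i+1)}$ to a product of two $qQSS^*$ protocols of known parameters.
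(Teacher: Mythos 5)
Your proposal is correct and follows essentially the same route as the paper: induction on $i$ via Lemma \ref{ProductThreshold}, with the key observation that the gap $n_i-k_i+1$ is multiplicative and equals $(n-k+1)^i$ (the paper phrases this as solving the recursion $u_{i+1}=(n-k+1)u_i$ for $u_i=n_i-k_i+1$). The only cosmetic difference is that you peel off the factor $G$ on the right ($G^{*(i+1)}=G^{*i}*G$) while the paper peels it off on the left; since the parameter formula of Lemma \ref{ProductThreshold} is symmetric in the two factors, this changes nothing.
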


\begin{proof}
	By induction: first, we notice that $n_1 = n$ and $k_1=k$.
	Then, if we write $G^{*i+1} = G * G^{*i}$, from Lemma \ref{ProductThreshold} 
$n_{i+1} = n.n_i$ and $k_{i+1} = n.n_i - (n-k+1)(n_i-k_i+1) +1$ 

	One can see that $n_i=n^i$.
	Now we consider the sequence $u_i = n_i-k_i+1$.
	$u_1 = n-k+1$ and $u_{i+1} = (n-k+1)u_i$.% (from formula (\ref{ki})).
	We deduce that $u_i = (n-k+1)^i$.
	Thus, by definition of $(u_i)$, $k_i = n_i-u_i+1 = n^i-(n-k+1)^i+1$.
\hfill $\Box$
\end{proof}

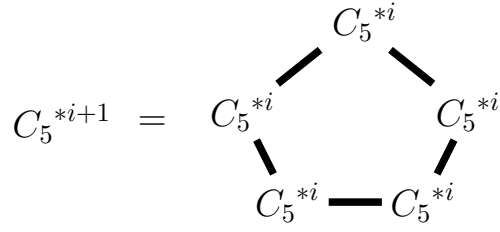
\begin{figure}[h!]
	\begin{center}
		\begin{tikzpicture}[scale=0.6]
			\node at (-4,1.75) {\Large ${C_5}^{*i+1}$};
			\node at (-2, 1.75) {\Large $=$};
			\node (v1) at (2.5,4){\Large ${\,\,\,C_5}^{*i}$};
			\node (v2) at (5,2){\Large ${C_5}^{*i}$};
			\node (v3) at (4,0){\Large ${C_5}^{*i}$};
			\node (v4) at (1,0){\Large ${C_5}^{*i}$};
			\node (v5) at (0,2){\Large ${C_5}^{*i}$};
			\draw[line width=1mm] (v1) -- (v2) -- (v3) -- (v4) -- (v5) -- (v1);
		\end{tikzpicture}
	\end{center}
\caption{Decomposition of the graph ${C_5}^{*i+1}$.}
\label{C5_i+1}
\end{figure}

\begin{theorem}
	For all $i \in \mathbb{N}^{*}$, the graph ${C_5}^{*i}$ realizes a $(( n, n-n^{ \frac{ log(3)}{ log(5)}} + 1 ))$ protocol (with $n = 5^i$).
\end{theorem}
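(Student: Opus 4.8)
The statement is, modulo the ordering of the two parameters, the instance $G = C_5$, $n=5$, $k=3$ of Lemma~\ref{G*i}: it asserts that $C_5^{*i}$ realizes a threshold scheme on $5^i$ players with threshold $n - n^{log(3)/log(5)}+1$. The plan is therefore threefold: (i) establish the base case that $(C_5, V(C_5))$ realizes a $((3,5))$ protocol; (ii) feed this into the recurrence of Lemma~\ref{G*i}; and (iii) rewrite the resulting threshold in the claimed closed form. Steps (ii) and (iii) are pure substitution and an elementary logarithmic identity, so the only genuine content lies in the base case.

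For the base case I would check that $(C_5, V(C_5), 3)$ satisfies the hypotheses of Theorem~\ref{qQSS*}, i.e. that for every $B\subseteq V(C_5)$ with $|B|=3$ there exist $C_B,D_B\subseteq B$ with $|D_B|$ odd, $Odd(D_B)\subseteq B$, and (since $A=V$) $Odd(C_B)\cap\overline B = \overline B$. Up to the rotational symmetry of the cycle there are only two types of $3$-subset, namely three consecutive vertices and two adjacent vertices together with one isolated vertex, so this is a short finite verification via Corollary~\ref{cor:charac}. Concretely, for $B=\{1,2,3\}$ one may take $D_B=\{2\}$ (giving $Odd(D_B)=\{1,3\}\subseteq B$) and $C_B=\{1,3\}$ (giving $Odd(C_B)\supseteq\{4,5\}=\overline B$), and for $B=\{1,2,4\}$ one may take $D_B=\{1,2,4\}$ (giving $Odd(D_B)=\{1,2\}\subseteq B$) and $C_B=\{4\}$ (giving $Odd(C_B)=\{3,5\}=\overline B$); all other $3$-subsets are rotations of these. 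Alternatively one simply invokes the $((3,5))$ construction of \cite{MS08}. This shows that $C_5=C_5^{*1}$ realizes a $((3,5))$ protocol, which is the $i=1$ instance of the claim.

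Applying Lemma~\ref{G*i} with $G=C_5$, $n=5$, $k=3$ then yields immediately that $C_5^{*i}$ realizes a $((k_i,n_i))$ protocol with
$$n_i = 5^i, \qquad k_i = 5^i - (5-3+1)^i + 1 = 5^i - 3^i + 1.$$
It remains only to express $3^i$ in terms of $n_i=5^i$. Writing $n=5^i$ gives $i = log(n)/log(5)$, whence $3^i = 3^{log(n)/log(5)} = n^{log(3)/log(5)}$, so that $k_i = n - n^{log(3)/log(5)} + 1$, exactly the threshold in the statement.

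I do not expect any substantive obstacle: once the base case is in hand, the theorem is a direct instantiation of Lemma~\ref{G*i} followed by the change of base $3^i = n^{log(3)/log(5)}$. The only place requiring care is the finite verification for $C_5$, which is in any case already guaranteed by \cite{MS08} and whose two representative cases are exhibited above.
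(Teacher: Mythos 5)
Your proof is correct and follows essentially the same route as the paper: the paper likewise instantiates Lemma~\ref{G*i} with $n=5$, $k=3$, citing \cite{MS08} for the $((3,5))$ base case, and concludes via $3^i = 5^{i\log(3)/\log(5)} = n^{\log(3)/\log(5)}$. Your explicit verification of the two representative $3$-subsets of $C_5$ is a correct (and welcome) addition that the paper leaves to the citation.
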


\begin{proof}
	In Lemma \ref{G*i} with $n=5$ and $k=3$ (since the graph $C_5$ realizes a $((3,5))$ \cite{MS08}), we have $n=n_i = 5^i$ and $k_i = 5^i - 3^i + 1 = 5^i-5^{i \frac{log(3)}{log(5)}}+1=n-n^{\frac{log(3)}{log(5)}}+1$. \hfill $\Box$
\end{proof}

\section{Lower Bound}\label{sec:lowerbound}
\label{lb}
By the no-cloning theorem, it is not possible to get two separated copies of the secret starting from only one copy. Thus, if we consider a quantum secret sharing protocol with parameters $((k,n))$ we must have $k > \frac{n}{2}$.

We derive here less trivial lower bounds for our family of protocols. 
\begin{definition}
Let $B$ be a subset of vertices of a graph.
\begin{itemize}
\item $D \subseteq B$  is said to be "odd-wise in $B$" or "$B$-odd-wise" iff $D \cup Odd(D) \subseteq B$
\item $C \subseteq B$  is said to be "even-wise in $B$" or "$B$-even-wise" iff $C \cup Even(C) \subseteq B$
\end{itemize}
\end{definition}

\begin{lemma}
\label{smallCD}
Let $G=(V,E)$ be a graph which realizes a $((k,n))$ protocol.
Then, for all set $B$ of size $k$, there exists a set $X \subseteq B$ such that:
$
|X| \leq \frac{2}{3} \left( n-k+1 \right)
$
where $X$ is either a $B$-odd-wise set of size 1 mod 2 or a $B$-even-wise set.
\end{lemma}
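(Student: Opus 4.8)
The plan is to translate the two combinatorial witnesses guaranteed by the $((k,n))$ hypothesis into linear algebra over $\mathbb{F}_2$ and then exploit a three-term cancellation to beat the trivial bound. Fix a set $B$ with $|B|=k$ and write $r=n-k=|\overline B|$, so the goal is $|X|\le \frac23(r+1)$. Since $A=V(G)$ here, a $B$-odd-wise set of odd size is exactly a $D\subseteq B$ with $Odd(D)\cap\overline B=\emptyset$ and $|D|$ odd, and a $B$-even-wise set is exactly a $C\subseteq B$ with $Odd(C)\cap\overline B=\overline B$. Using the cut matrix $\Gamma_{B,\overline B}$ introduced earlier, let $M:\mathbb{F}_2^{B}\to\mathbb{F}_2^{\overline B}$ be the map $X\mapsto Odd(X)\cap\overline B$ and let $p(X)=|X|\bmod 2$ be the parity functional. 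Then a $B$-odd-wise odd-size set is a vector with $MX=0$, $p(X)=1$, and a $B$-even-wise set is a vector with $MX=\mathbf 1$. The hypothesis that $G$ realizes a $((k,n))$ protocol provides, for this $B$, one witness of each kind.

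Next I would pass to the augmented map $M'=(M,p):\mathbb{F}_2^{B}\to\mathbb{F}_2^{\overline B}\oplus\mathbb{F}_2$, whose target space has dimension $r+1$. Consider the three targets $t_1=(0,1)$, $t_2=(\mathbf 1,0)$ and $t_3=(\mathbf 1,1)$; they satisfy $t_1\oplus t_2\oplus t_3=0$. A solution of $M'X=t_1$ is a $B$-odd-wise odd-size witness, while solutions of $M'X=t_2$ and of $M'X=t_3$ are both $B$-even-wise witnesses (the parity coordinate merely records the size mod $2$). The two witnesses supplied by the hypothesis show that $t_1$ and one of $\{t_2,t_3\}$ lie in the image of $M'$; since any two of the $t_i$ span the third, all three targets lie in the column span of $M'$.

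Then I would choose a minimal common support: a set $T\subseteq B$ for which the augmented columns $\{M'_v : v\in T\}$ are linearly independent while still spanning $t_1,t_2,t_3$ (start from the union of the two given witnesses, then discard redundant columns one at a time without shrinking the span). Linear independence in an $(r+1)$-dimensional space forces $|T|\le r+1$. Because each $t_i$ lies in this span, there are subsets $x_1,x_2\subseteq T$ with $M'x_1=t_1$ and $M'x_2=t_2$; setting $x_3:=x_1\Delta x_2\subseteq T$ gives $M'x_3=t_3$. Thus $x_1$ is a $B$-odd-wise odd-size witness and $x_2,x_3$ are $B$-even-wise witnesses, all contained in $T$, with $x_1\Delta x_2\Delta x_3=\emptyset$. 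The final step is the counting identity: since no element lies in an odd number of the three sets, each element of $x_1\cup x_2\cup x_3$ lies in exactly two of them, whence $|x_1|+|x_2|+|x_3|=2\,|x_1\cup x_2\cup x_3|\le 2|T|\le 2(r+1)$. Taking $X$ to be the smallest of $x_1,x_2,x_3$ yields $|X|\le \tfrac13\cdot 2(r+1)=\tfrac23(n-k+1)$, and $X$ is of the required type by construction.

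I expect the only delicate point to be the penultimate step, namely forcing all three witnesses into a single shared independent support $T$ of size at most $r+1$. Working with individually minimal witnesses is not enough: a minimal odd-wise set and a minimal even-wise set may have nearly disjoint supports of total size about $2r$, which would give only roughly $\tfrac43 r$. The improvement to $\tfrac23$ comes precisely from combining the three-term XOR cancellation $t_1\oplus t_2\oplus t_3=0$ with a common span, so the care lies in arguing that this common support can simultaneously be made linearly independent (hence dimension-bounded) and still realize all three targets.
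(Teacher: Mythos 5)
Your proof is correct and is essentially the paper's own argument: both produce an odd-wise witness of odd size $D$, an even-wise witness $C$, and the third witness $D\Delta C$ (also even-wise) all supported in a common set of size at most $n-k+1$, and both conclude via the identity $|D|+|C|+|D\Delta C| = 2\,|D\cup C| \le 2(n-k+1)$. The only difference is presentational: you bound the common support by selecting linearly independent columns of the parity-augmented cut matrix $M'=(\Gamma_{\overline B,B},p)$, whereas the paper takes the dual route of Gaussian elimination on the kernel $\mathcal{D}_0$ (of dimension $\ge 2k-n-1$) to zero out coordinates -- these are the same rank computation.
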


\begin{proof}
First, let $\Gamma_B \in \mathcal{M}_{k,n-k}(\mathbb{F}_2)$ be a cut matrix of $G$ corresponding to the cut $(B,V\setminus B)$. 
We can see $\Gamma_B$ as the linear map that maps a set $D \subseteq B$ to its odd neighbourhood in $V \setminus B$: 
Consequently, any $B$-odd-wise set $D$ corresponds to a linear combination of the columns of the matrix $\Gamma_B$ which equals the null vector.
Therefore, $\{ D \big| D\ is\ odd\text{-}wise\ in\ B \} = Ker( \Gamma_B)$, and $t=dim( Ker( \Gamma_B)= k - dim( Im( \Gamma_B)) \geq 2k-n$.

As $|X\Delta Y|=|X|+|Y| \bmod 2$, the sets $\mathcal{D}_1=\{D \subseteq B, |D|=1 \bmod 2 \, D \,odd\text{-}wise \,in \,B\}$ and $\mathcal{C}_1=\{C \subseteq B,C \,even\text{-}wise\, in \,B\}$ are two affine subspaces having the same vector subspace $\mathcal{D}_0=\{D \subseteq B, |D|=0 \bmod 2 \wedge   D \,odd\text{-}wise \,in\, B\}$.

The dimension of $\mathcal{D}_0$ is $t-1$, therefore, by gaussian elimination its exists a set $X_0 \subseteq B$, $|X_0|=t-1$ such that it exists sets $C_1\in \mathcal{C}_1$ and $D_1\in \mathcal{D}_1$ satisfying $X_0\cap C_1=X_0 \cap D_1=\emptyset$. Thus $|C_1\cup D_1|\le k-t+1 \le n-k+1$.

Therefore $2|D_1\cup C_1|= |D_1|+|C_1|+|D_1\Delta C_1| \le 2(n-k+1)$ which implies that one of the three sets has cardinality smaller than $2(n-k+1)$, as the first set is odd-wise with odd cardinality and the two others are even-wise in $B$, which concludes the proof.
\hfill $\Box$
\end{proof}

By counting the even-wise and odd-wise sets and their possible completion into q-accessing sets, we get the following lower bound.
\begin{theorem}
There exists no graph $G$  that has an $((k,n))$ qQSS protocol with $k< \frac{n}{2} + \frac{n}{157}$.
\end{theorem}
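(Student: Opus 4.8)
The plan is to argue by contradiction. Suppose some graph $G$ on $n$ vertices makes $(G,V(G))$ into a $((k,n))$ protocol with $k<\frac n2+\frac n{157}$. Since $A=V(G)$, the conditions $|D\cap A|=1\bmod 2$ and $Odd(C)\cap\overline B=A\cap\overline B$ simplify to ``$|D|$ odd'' and ``$\overline B\subseteq Odd(C)$''; by Corollary \ref{cor:charac} together with the reduction of quantum-accessibility to the classical case, a set $B$ of size $k$ can reconstruct the secret exactly when it contains \emph{both} a $B$-odd-wise set of odd cardinality and a $B$-even-wise set (the latter encoding that $\overline B$ cannot reconstruct). The threshold forces every $B$ with $|B|=k$ to be accessing, so both kinds of witnesses must exist in every such $B$, and Lemma \ref{smallCD} hands us, for each $B$, a witness $X\subseteq B$ with $|X|\le\frac23(n-k+1)$ of at least one of the two types.

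First I would set up a double count. Writing $s:=\lfloor\frac23(n-k+1)\rfloor$, I count incidences $(B,X)$ where $|B|=k$ and $X$ is a small witness for $B$; each size-$k$ set contributes at least one, giving $\binom nk\le\sum_{X}\nu(X)$, with $\nu(X)$ the number of size-$k$ sets into which $X$ completes. The decisive point is that a witness drags its whole odd (resp. even) neighbourhood into $B$: $X$ is $B$-odd-wise iff $X\cup Odd(X)\subseteq B$, and $B$-even-wise iff $X\cup Even(X)\subseteq B$. Hence the odd-role contribution of $X$ is $\binom{n-a}{k-a}$ with $a=|X\cup Odd(X)|$, and the even-role contribution is $\binom{n-b}{k-b}$ with $b=|X\cup Even(X)|$. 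The structural identity I would lean on is $a+b=n+|X|$, which holds because $Odd(X)$ and $Even(X)=V\setminus Odd(X)$ partition $V$ while $X\cap Odd(X)$ and $X\cap Even(X)$ partition $X$. For $k$ just above $n/2$ this identity forbids a fixed $X$ from being cheap in both roles at once, and controls how the two core-sizes trade off.

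I would then convert the incidence inequality into a numerical one. Organising the sum by $j=|X|$ and by core size, using $\binom nj\binom{n-j}{k-j}=\binom nk\binom kj$ and the monotonicity of $\binom{n-a}{k-a}$ in $a$, the right-hand side becomes a weighted sum of binomials that I would estimate with $\binom{n}{\alpha n}\le 2^{nH(\alpha)}$. Setting $k=\alpha n$ and $s=\frac23(1-\alpha)n$ produces an inequality relating $H(\alpha)$ to the entropy contributions of the witness sizes and their cores, after which I would optimise over the cut-off on the witness size. This optimisation is what manufactures the explicit constant: the inequality is violated precisely once $\alpha$ falls below $\frac12+\frac1{157}$, which is the desired contradiction.

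The main obstacle, and the step requiring the most care, is the coverage bound. A single small witness can a priori complete into very many size-$k$ sets, namely when its core $X\cup Odd(X)$ is no larger than $X$ itself, so a crude sum over all sets of size $\le s$ is far too lossy to beat $\binom nk$. Making the count bite requires using \emph{both} accessibility witnesses simultaneously, invoking the complementary identity $a+b=n+|X|$ to rule out a witness being cheap in both roles, and then applying sharp binomial/entropy estimates; only the combination is tight enough to force $k\ge\frac n2+\frac n{157}$. Extracting the exact fraction $\frac1{157}$ from the resulting inequality, rather than settling for a weaker constant, is the delicate quantitative endpoint of the argument.
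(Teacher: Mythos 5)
Your overall strategy --- double-count pairs $(B,X)$ where $X$ is the small witness supplied by Lemma \ref{smallCD}, bound the number of $k$-sets each witness can serve via the size of its core $X\cup Odd(X)$ or $X\cup Even(X)$, and finish with entropy estimates --- is exactly the paper's. But the step you flag as decisive is resolved by the wrong mechanism, and as proposed it does not close. You correctly observe that the count is lost if some witness has a tiny core (e.g.\ $Odd(X)\subseteq X$, so $a=|X|$ and $\nu_{odd}(X)\approx\binom{n}{k}$), and you propose to exclude this via the identity $a+b=n+|X|$, arguing that $X$ cannot be ``cheap in both roles.'' The identity is true, but it is not the right tool: Lemma \ref{smallCD} hands each $B$ a witness of \emph{one} type only, so a witness that is cheap in the role it actually plays and expensive in the role it never plays still wrecks the sum. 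Nothing in $a+b=n+|X|$ prevents an odd-cardinality $X$ with $a=|X|$ and $b=n$ from contributing $\binom{n-|X|}{k-|X|}$ completions in the odd role.

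What actually rules this out --- and what the paper uses --- is the threshold property itself, applied once per role. If $D$ has odd cardinality and $|D\cup Odd(D)|\le n-k$, then some set of size $n-k$ contains $D\cup Odd(D)$ and is therefore c-accessing, contradicting privacy below the threshold; hence every odd-role witness has core size $a\ge n-k+1$. Dually, if $|C\cup Even(C)|\le n-k$, then some $k$-set is forced to be private, contradicting accessibility at the threshold; hence every even-role witness has $b\ge n-k+1$. These two separate lower bounds (not the trade-off identity) give the uniform coverage bound $\binom{n-(n-k+1)}{k-(n-k+1)}=\binom{k-1}{2k-n-1}$ per witness, whence $\binom{n}{k}\le 2\sum_{i=1}^{\frac{2}{3}(n-k+1)}\binom{n}{i}\binom{k-1}{2k-n-1}$, and the asymptotic entropy analysis then yields $k>\frac{n}{2}+\frac{n}{157}$. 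Without replacing your identity-based argument by these two threshold-derived core bounds, the incidence count cannot beat $\binom{n}{k}$ and the proof does not go through.
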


\begin{proof}

We consider a graph $G=(V,E)$ which realizes a $((k,n))$ secret sharing protocol.

Any set of size $n-k$ is not c-accessing, therefore, any set $D$, with $|D| = 1 \bmod 2$ satisfies $ |D \cup Odd(D)| \geq n-k+1$. Consequently,  given a set $D$, with $|D|=1 \bmod 2$, it exists at most $ {n -(n-k+1) \choose {k-(n-k+1)}}={k-1 \choose 2k-n-1}$ sets  $B$ of size $k$ containing $D\cup Odd(D)$ and such that $D$ is odd-wise in $B$. 

Similarly,  any set of size $k$ is c-accessing, for any set $C$,  $ |C \cup Even(C)| \geq n-k+1$.
Therefore, given a set $C$  the number of sets $B$ of size $k$ containing $C$ and such that $C$ is even-wise in $B$ is at most ${k-1 \choose 2k-n-1}$. 

With Lemma \ref{smallCD}, each set $B \subseteq V$ of size $k$ contains either a $B$-odd-wise subset $D$ of size odd or a $B$-even-wise subset $C$ such that $|D| \leq \frac{2}{3}(n-k+1)$ or $|C| \leq \frac{2}{3}(n-k+1)$. Thus by counting twice all the sets of cardinality smaller then $\frac{2}{3}(n-k+1)$ (as a potential odd-wise or even-wise set) we can upper bound the set of possible cuts of size $k$ with $ {n \choose k} \leq 2\sum_{i=1}^{\frac{2}{3}(n-k+1)} {n \choose i} {k-1 \choose 2k-n-1}$.
%A numerical analysis shows that 
The previous inequality implies that
$k > \frac{n}{2} + \frac{n}{157}$
when $n \to \infty$. \hfill $\Box$
\end{proof}
\begin{corollary}
There exists no $qQSS$ protocol for $n \ge 79$
\end{corollary}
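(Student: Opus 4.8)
The plan is to combine the lower bound Theorem just proved with the rigidity of the threshold for \emph{pure} quantum secret sharing established in \cite{G00}. A $qQSS$ protocol distributes a single pure global state $\ket{G_\phi}$, so it enjoys the complementarity property already recalled in this section: a set $B$ of players can reconstruct the quantum secret if and only if its complement $\overline B$ has no information about it. The entire argument reduces to showing that this self-duality pins the admissible threshold of a pure scheme to a single value that the Theorem forbids once $n$ is large enough.

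First I would show that if a $qQSS$ protocol on $n$ players is a $((k,n))$ threshold scheme, then necessarily $k=\frac{n+1}2$. A threshold scheme sorts every subset into two classes by size alone: sets of size at least $k$ reconstruct the secret and sets of size at most $k-1$ have no information. Taking $B$ of size $k$, it is accessing, so by complementarity its complement of size $n-k$ has no information, forcing $n-k\le k-1$, i.e. $k\ge\frac{n+1}2$. Taking any set of size $k-1$, it has no information, so by complementarity its complement of size $n-k+1$ is accessing, forcing $n-k+1\ge k$, i.e. $k\le\frac{n+1}2$. Hence $k=\frac{n+1}2$, recovering the value derived from the no-cloning theorem in \cite{G00}; in particular a threshold can exist only for odd $n$, and for even $n$ there is nothing to prove.

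Finally I would feed this forced value into the Theorem, which asserts that no graph realises a $((k,n))$ $qQSS$ protocol with $k<\frac n2+\frac n{157}$. Substituting $k=\frac{n+1}2=\frac n2+\frac12$, the forbidden region is entered exactly when $\frac12<\frac n{157}$, i.e. when $n>\frac{157}2=78.5$, that is $n\ge 79$. Thus for every $n\ge 79$ the only value of $k$ compatible with a pure threshold scheme is ruled out, so no $qQSS$ protocol exists.

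I do not expect a genuine obstacle here: the mathematical content lies entirely in the preceding Theorem, and the corollary is a one-line arithmetic comparison on top of it. The only points needing care are to invoke complementarity correctly (it holds precisely because the distributed state is pure, which is what distinguishes $qQSS$ from the classically-augmented $qQSS^*$) and to keep track of the integrality of $k$ so that the even-$n$ case is dispatched cleanly.
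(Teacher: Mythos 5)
Your proposal is correct and follows essentially the same route as the paper: both arguments combine Gottesman's rigidity of the threshold for pure schemes (equivalently $n=2k-1$, which you re-derive from the complementarity property rather than citing it directly) with the preceding lower bound theorem, and the substitution $k=\frac{n+1}{2}$ into $k<\frac n2+\frac n{157}$ yields $n>78.5$ exactly as in the paper.
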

\begin{proof}
By Gottesman's characterisation \cite{G00} a $qQSS$ protocol has a threshold $((k,2k-1))$. Moreover,
$k \ge n/2 +n/157$ using the previous lower bound. Therefore $k\le 159/4$ and the number of players $n=2k-1 \ge 79$.
%= k-1/2+(2k-1)/157$ 
%$157/2 \ge 2k -1$
%$k\le 159/4$
\hfill $\Box$
\end{proof}

\end{document}